\newtheorem{theorem}{Theorem}
\newtheorem{lemma}[theorem]{Lemma}
\begin{document}

\hfill SLAC-PUB-17780

\title{
The Phase Space Distance Between Collider Events
}

\author[a]{Tianji Cai,}
\author[b]{Junyi Cheng,}
\author[c,d]{Nathaniel Craig,}
\author[c]{Giacomo Koszegi,}
\author[e]{and Andrew J.~Larkoski}

\affiliation[a]{SLAC National Accelerator Laboratory, Stanford University, Stanford, CA 94309, USA}
\affiliation[b]{Department of Physics, Harvard University, Cambridge, MA 02138, USA}
\affiliation[c]{Department of Physics, University of California, Santa Barbara, CA 93106, USA}
\affiliation[d]{Kavli Institute for Theoretical Physics, Santa Barbara, CA 93106, USA}
\affiliation[e]{Department of Physics and Astronomy, University of California, Los Angeles, CA 90095, USA\\
Mani L. Bhaumik Institute for Theoretical Physics, University of California, Los Angeles, CA 90095, USA}

\emailAdd{tianji@slac.stanford.edu}
\emailAdd{junyi\_cheng@g.harvard.edu}
\emailAdd{ncraig@ucsb.edu}
\emailAdd{koszegi@ucsb.edu}
\emailAdd{larkoski@physics.ucla.edu}

\abstract{
How can one fully harness the power of physics encoded in relativistic $N$-body phase space? Topologically, phase space is isomorphic to the product space of a simplex and a hypersphere and can be equipped with explicit coordinates and a Riemannian metric.  This natural structure that scaffolds the space on which all collider physics events live opens up new directions for machine learning applications and implementation.  Here we present a detailed construction of the phase space manifold and its differential line element, identifying particle ordering prescriptions that ensure that the metric satisfies necessary properties.  We apply the phase space metric to several binary classification tasks, including discrimination of high-multiplicity resonance decays or boosted hadronic decays of electroweak bosons from QCD processes, and demonstrate powerful performance on simulated data. Our work demonstrates the many benefits of promoting phase space from merely a background on which calculations take place to being geometrically entwined with a theory’s dynamics.}

\maketitle
\flushbottom

\newpage

\section{Introduction}
\label{sec:intro}

Novel representations of collider data play an increasingly important role in the modern machine learning (ML) era \cite{Larkoski:2017jix,Guest:2018yhq,Albertsson:2018maf,Radovic:2018dip,Carleo:2019ptp,Bourilkov:2019yoi,Schwartz:2021ftp,Karagiorgi:2021ngt,Shanahan:2022ifi}. In particular, a physically-motivated notion of the distance between collider events is a powerful input to ML-based methods for event classification and anomaly detection. Recently, optimal transport distances between energy flow distributions \cite{Komiske:2019fks, Komiske_2020, PhysRevD.105.076003,Cai:2020vzx, cai2021linearizedhellingerkantorovichdistance} or spectral functions \cite{Larkoski:2023qnv} have given rise to a new geometry of collider events, with widespread applications \cite{Komiske:2019jim,Cesarotti:2020hwb,Cesarotti:2020ngq,Stein:2020rou,Kasieczka:2021xcg,Howard:2021pos,DiGuglielmo:2021ide,Kansal:2021cqp,Collins:2021pld,Cheng:2020dal,CrispimRomao:2020ejk,ATLAS:2023mny,Craig:2024rlv, Gouskos_2023}. However, there is another natural distance between collider events that has yet to be fully explored: the one furnished by the phase space manifold on which scattering amplitudes are defined. 

Traditionally, little attention has been paid to the intrinsic geometry of the phase space manifold itself, except within the context of the structure of S-matrix elements (especially at the boundaries of the phase space). The deeper potential of the phase space manifold was recently highlighted in Ref.~\cite{Larkoski:2020thc} (as well as Refs.~\cite{Cox:2018wce,Henning:2019mcv,Henning:2019enq}), which presented a covariant description of the massless $N$-body phase space manifold $\Pi_N$ and demonstrated that it is isomorphic to the product space of an $N-1$-dimensional simplex $\Delta_{N-1}$ and a $2N-3$-dimensional hypersphere $S^{2N-3}$, $\Pi_N\cong\Delta_{N-1}\times S^{2N-3}$. Explicit global coordinates can then be constructed on the phase space, promoting it to a Riemannian metric space. This offers a new way to organize collider data that is firmly grounded in the underlying physics and furnishes a novel representation of collider data for any downstream machine learning application.

In this paper we develop and demonstrate a practical metric for collider events using the global phase space coordinates presented in Ref.~\cite{Larkoski:2020thc}, tailored in part to the structure of events at hadron colliders. Constructing a true metric from these phase space coordinates poses a number of challenges, which we address in turn. Given that the dimension of the phase space manifold depends on the particle multiplicity, exclusive clustering of events into a fixed number of $N$ ``particles'' enables the calculation of distances between different events on a phase space manifold of uniform dimension $3N-4$.\footnote{A common practice in analyzing variable particle number events in collider physics is to fix the multiplicity to be some very large number, expected to be larger than that in any given event.  Then, events are zero-padded with momentum four-vectors to ensure that every event has the same number of particles.  Here, we explicitly study the resolved multiplicity dependence of the phase space metric as a tool for discrimination, but this more inclusive approach is simpler in practice as it does not require knowledge of the preferred or natural dimensionality of phase space.} Ensuring that the phase space metric satisfies the identity of indiscernibles requires minimizing distances over $N!$ permutations of particles for each pair of events, which is an exponentially computationally intensive task. Instead, we efficiently approximate this exact minimization by first minimizing distances on the simplex part of the phase space manifold and then azimuthally rotating events to minimize distances on the hypersphere part. Together, these steps ensure that the metric on phase space satisfies the identity of indiscernibles and defines a physically principled prescription for event pre-processing. We further tailor the metric to global phase space coordinates relevant to hadron colliders, including both the center-of-mass frame for full events and the collinear or infinite momentum frame for individual boosted jets. 

Different types of collider events exhibit distinctive structures in phase space, making the phase space distance between events a potentially powerful tool for event classification and anomaly detection. We demonstrate this potential with a number of examples in which the phase space distances among ensembles of signal and background events are inputs to simple ML algorithms for classification. Distances on the phase space manifold lead to competitive classification both at the level of entire collider events and within the substructure of boosted jets. Taken together, our results suggest that distances on a suitably-defined phase space manifold provide a practical and effective representation of collider data with numerous applications.

This paper is organized as follows: In Sec.~\ref{sec:PSGeometry} we review the prescription for constructing the phase space manifold and explicit global coordinates in the center-of-mass frame presented in Ref.~\cite{Larkoski:2020thc} and develop an analogous prescription in the collinear limit for boosted objects. We then address the practical aspects of constructing a phase space metric for collider events in Sec.~\ref{sec:Implementation}, including, among others, the representation of different events in terms of a fixed number of particles and the efficient minimization of distances over particle permutations. We demonstrate the classification power of the phase space metric coupled to simple ML algorithms in Sec.~\ref{sec:PS4Collider} with an event-level Standard Model $t \bar t$ vs.~QCD dijet study, an event-level beyond-the-Standard Model Hidden Valley isotropic resonance decay vs.~QCD dijet study, and a boosted, hadronically-decaying $W$ boson vs.~QCD jet study. We summarize our conclusions and sketch future directions in Sec.~\ref{sec:conclusions}, reserving technical details about the construction of coordinates and the minimization over particle permutations, as well as the full result tables, for the three appendices.

\section{Intrinsic Geometry of the $N$-body Phase Space}
\label{sec:PSGeometry}

In this section, we review the formalism of Ref.~\cite{Larkoski:2020thc} for establishing the phase space manifold in the center-of-mass (CM) frame.  We restrict our study to assume that experimentally detected particles are all massless, which is a good approximation in the high energy collider environment of the LHC.  For our purposes in this paper, we will need more explicit results for application on realistic collider event data.  We will thus provide all necessary mappings from momentum space coordinates to the natural coordinates on the simplex-hypersphere formulation of the phase space and the corresponding Riemannian metric.  Additionally, we will construct the manifold on the phase space of a boosted jet, effectively in the infinite-momentum frame. Here the coordinate of the jet axis is special, breaking the event's O(3) rotational invariance in the CM frame. Further, the jet mass is not necessarily measured and constrained, which introduces a new coordinate.

\subsection{Phase Space in the CM Frame: A Product Manifold}
\label{subsec:ProductManifold}

We start with the volume form of the $N$-body phase space in terms of the four-momenta $p_k$ of the $N$ massless particles constituting an event in Cartesian coordinates, where
\begin{gather}
    \label{eq:volume_form_1}
    d\Pi_N=(2\pi)^{4-3N}\left[\prod_{k=1}^Nd^4p_k\,\delta^+(p_k^2)\right]\delta^{(4)}\left(Q-\sum_{k=1}^Np_k\right)\,.
\end{gather}
Here $Q=(Q,0,0,0)$ is the total energy of the event in the CM frame and $\delta^+(p_k^2)=\delta(p_k^2)\,\Theta(p_{0k})$ enforcing that all particles are on-shell with positive energy. Note that the phase space in Eq.~\eqref{eq:volume_form_1} has a mass dimension of $2N-4$, whereas its number of degrees of freedom is $3N-4$. Unless otherwise specified, in our paper the term ``dimension'' refers to the number of degrees of freedom of a system, as this determines the dimension of the phase space manifold. 

The on-shell constraints can be satisfied by re-expressing the four-momenta in terms of the two spinors for each particle, where
\begin{gather}
    \lambda_k^\alpha=\frac{1}{\sqrt{p_{0k}-p_{3k}}}
\begin{pmatrix}
    p_{0k}-p_{3k} \\ p_{1k}+ip_{2k}
\end{pmatrix}\,,\quad\quad
    \Tilde{\lambda}_k^{\Dot{\alpha}}=\frac{1}{\sqrt{p_{0k}-p_{3k}}}
\begin{pmatrix}
    p_{0k}-p_{3k} \\ p_{1k}-ip_{2k}
\end{pmatrix}\,,
\end{gather}
with $k=1,\dots,N$ labeling the particle and $\alpha,\dot\alpha=1,2$ labeling the spinor components. Note that we have $\Tilde{\lambda}_k^{\Dot{\alpha}} = \lambda_k^{\alpha*}$ because the momentum matrix $p_k \cdot \sigma = \lambda_k\tilde\lambda_k$ is Hermitian.

We now define two $N$-component vectors
\begin{align}
&\vec{u}\equiv\frac{1}{\sqrt{\sum_{j=1}^N(p_{0j}-p_{3j})}}(\lambda_1^1~\lambda_2^1~\dots~\lambda_N^1)\text{ , with components  }u_k=\frac{1}{\sqrt{Q}}\sqrt{p_{0k}-p_{3k}}\,,\\
&\vec{v}\equiv\frac{1}{\sqrt{\sum_{j=1}^N(p_{0j}+p_{3j})}}(\lambda_1^2~\lambda_2^2~\dots~\lambda_N^2)\text{ , with components }v_k=\frac{1}{\sqrt{Q}}\frac{p_{1k}+ip_{2k}}{\sqrt{p_{0k}-p_{3k}}}\,,
\end{align}
in the CM frame. Notice that $\Vec{u}$ is a real vector while $\Vec{v}$ is complex, and both are dimensionless in the unit of mass.  
Written in terms of these $\Vec{u}, \Vec{v}$ vectors, Eq.~\eqref{eq:volume_form_1} now becomes
\begin{gather}
\begin{split}
    \label{eq:volume_form_2}
    d\Pi_N
    =&(2\pi)^{4-3N}Q^{2N-4}\,\frac{d^Nu\,d^Nv}{\text{U}(1)^N}\,\delta(1-|\Vec{u}|^2)\,\delta(1-|\Vec{v}|^2)\,\delta^{(2)}(\Vec{u}^\dagger\Vec{v})\,,
\end{split}
\end{gather}
where the division by U(1) represents implicit restriction to one element of the little group action on the spinors. 
 
By introducing the new coordinate
\begin{align}\label{eq:rho}
    \rho_k\equiv u_k^2\,,
\end{align}
the $\vec u$ part of the volume form reduces to
\begin{align}\label{eq:volume_form_simplex_part}
    \frac{d^Nu}{\text{U}(1)^N}\,\delta(1-|\Vec{u}|^2)=\frac{1}{2^N}\prod_{k=1}^N [d\rho_k]\, \delta\left(1-\sum_{k=1}^N\rho_k\right)=\frac{1}{2^N}\,d\Delta_{N-1}\,,
\end{align}
where $d\Delta_{N-1}$ is the flat measure on a $N-1$-dimensional simplex. We therefore obtain the simplex part of the phase space manifold.

For the terms in Eq.~\eqref{eq:volume_form_2} involving $\Vec{v}$, we first enforce the orthogonality condition $\delta^{(2)}(\Vec{u}^\dagger\Vec{v})$ to reduce the dimension of $\Vec{v}$ by 1 via the removal of the last component of $\Vec{v}$, i.e., $v_N$. Thus, we have
\begin{align}\label{eq:v}
    d^Nv\,\delta(1-|\Vec{v}|^2)\,\delta^{(2)}(\Vec{u}^\dagger\Vec{v})=\frac{d^{N-1}v}{\rho_N}\,\delta(1-|\Vec{v}|^2-|v_N|^2)\,.
\end{align}
Note that the $\Vec{v}$ on the right hand side of Eq.~\eqref{eq:v} now has only $N-1$ components, and $v_N$ can be expressed in terms of the components of $\Vec{u}$ and $\Vec{v}$. We then perform a change of variables from $\Vec{v}$ to $\Vec{v}'$, where $\Vec{v}'$ is an $N-1$-component complex unit vector,
\begin{align}\label{eq:v_change_of_vars}
    \frac{d^{N-1}v}{\rho_N}\,\delta(1-|\Vec{v}|^2-|v_N|^2)=d^{N-1}v'\,\delta(1-|\Vec{v}'|^2)=dS^{2N-3}\,,
\end{align}
which is manifestly the measure for the sphere $S^{2N-3}$.

Putting it all together, we have shown that the phase space is a product manifold of the simplex $\Delta_{N-1}$ and the sphere $S^{2N-3}$:
\begin{align}\label{eq:ndepvol}
d\Pi_N = \frac{(2\pi)^4}{Q^4}\left(\frac{Q^2}{16\pi^3}\right)^N\, d\Delta_{N-1}\, dS^{2N-3}\,.
\end{align}
One can easily check that dimensionality of the degrees of freedom works since the sum of the simplex and sphere dimensions gives us $(N-1) + (2N-3) = 3N-4$, which is the correct phase space dimension for a system of $N$ massless particles.  Also, because we will always compare events with the same total energy $Q$, we will often rescale energies so that $Q = 1$, which eliminates the overall dimensionality of the phase space metric.

\subsection{Phase Space in the CM Frame: Metric and Explicit Coordination}
\label{subsec:Metric}

From this factorization of the phase space manifold into a simplex and a sphere, we can then introduce explicit, global coordinates.  First, let's explicitly work out the full transformation from $\Vec{v}$ to $\Vec{v}'$. This can be expressed by
\begin{equation}
\begin{gathered}
    \label{eq:v'}
\begin{pmatrix}
    v_1' \\ v_2' \\ \vdots \\ v_{N-1}'
\end{pmatrix}
    =\frac{1}{u_N(1-u_N^2)}\,\Hat{R}
\begin{pmatrix}
    v_1 \\ v_2 \\ \vdots \\ v_{N-1},
\end{pmatrix}
\end{gathered}
\end{equation}
where the matrix $\hat R$ is
\begin{align}\label{eq:spheremat}
    &\Hat{R}\\
    &\hspace{0.25cm}=
\begin{pmatrix}
    u_1^2 + u_N (1-u_1^2-u_N^2) & u_1 u_2 (1-u_N) & \cdots & u_1 u_{N-1} (1-u_N) \\
    u_1 u_2 (1-u_N) & u_2^2 + u_N (1-u_2^2-u_N^2) & \cdots & u_2 u_{N-1} (1-u_N) \\
    \vdots & \vdots & \ddots & \vdots \\
    u_1 u_{N-1} (1-u_N) & u_2 u_{N-1} (1-u_N) & \cdots & u_{N-1}^2 + u_N (1-u_{N-1}^2-u_N^2)
\end{pmatrix}\,.\nonumber
\end{align}
Note that the above matrix $\Hat{R}$ is a real, symmetric and positive-definite, and the reality of $\Hat{R}$ ensures that the real and imaginary components of $\vec{v}$ are not mixed with each other. Furthermore, the determinant of the matrix is $u_N^{-1}$, and inverting $\Hat{R}$ gives a Jacobian $J=u_N^2$, which ensures that $d^{N-1} v = u_N^2\, d^{N-1} v'=\rho_N\, d^{N-1} v'$.   The matrix of Eq.~\eqref{eq:spheremat} was not explicitly constructed in Ref.~\cite{Larkoski:2020thc}, so we present its derivation in App.~\ref{app:spherematrix}.

The vector $\Vec{v}'$ can now be conveniently expressed by a generalized spherical coordinate system,
\begin{equation}
\begin{gathered}
    \label{eq:xi_eta}
    v'_1=e^{-i\xi_1}\cos\eta_1\\
    v'_2=e^{-i\xi_2}\sin\eta_1\cos\eta_2\\
    \vdots\\
    v'_{N-2}=e^{-i\xi_{N-2}}\sin\eta_1\cdots\sin\eta_{N-3}\cos\eta_{N-2}\\
    v'_{N-1}=e^{-i\xi_{N-1}}\sin\eta_1\cdots\sin\eta_{N-3}\sin\eta_{N-2},
\end{gathered}
\end{equation}
where $\xi_k\in[0,2\pi]$ and $\eta_k\in[0,\pi/2]$. These coordinates explicitly enforce the normalization $|\Vec{v}'|^2=1$. In the case of $N=2$, the sphere part of the manifold is $S^{2\cdot 2-3}=S^1$, and so no $\eta$ coordinate remains, and we're left with only one parameter for the sphere part, i.e., $\xi_1$.

To summarize, we have shown that the $(3N-4)$ degrees of freedom of the phase space manifold are composed of two groups of coordinates, with $(N-1)$ $\rho$ coordinates on  the simplex $\Delta^{N-1}$, and $(N-1)$ $\xi$ together with $(N-2)$ $\eta$ coordinates on the hypersphere $S^{2N-3}$.

Armed with a system of explicit coordinates, we can now easily define the corresponding metric on the phase space manifold. The line element of the simplex is 
\begin{gather}
    ds^2_{\Delta_{N-1}}=\sum_{k=1}^{N}d\rho_k^2.
\end{gather}
The summation above is over $N$ coordinates despite the fact that the simplex has dimension $N-1$. Indeed, these coordinates are not all independent but are instead constrained via the $\delta$-function in Eq.~\eqref{eq:volume_form_simplex_part}. With a suitable change of variables, the simplex line element could be written using only $N-1$ coordinates, but this is not necessary for our purposes. The line element of the hypersphere follows a recursive description
\begin{gather}
    ds^2_{S^{2N-3}}=d\eta_{N-2}^2+\cos^2\eta_{N-2} \; d\xi_{N-1}^2+\sin^2\eta_{N-2} \; ds^2_{S^{2N-5}}\,,
\end{gather}
starting from $ds^2_{S^1}=d\xi_1^2$. 

In this paper, we are more interested in the distances between pairs of events assigned according to their positions on the phase space manifold, so these line elements can be promoted to honest Riemannian metrics that satisfy symmetry, identity of indiscernibles, and the triangle inequality.  For two events that we label $A$ and $B$, their metric distance on the simplex can be defined as
\begin{align}\label{eq:d_delta}
d_\Delta(\Vec{\rho}_A,\Vec{\rho}_B)\equiv\sqrt{(\Vec{\rho}_A-\Vec{\rho}_B)^2}=\sqrt{\sum_{k=1}^N(\rho_{kA}-\rho_{kB})^2}\,.
\end{align}
Distances on the hypersphere are determined by the angle between the two events, and therefore a simple choice of the metric on the hypersphere is
\begin{align}\label{eq:d_S}
d_S(\Vec{v}_A',\Vec{v}_B')\equiv\cos^{-1}(\Re{\Vec{v}_A'^\dagger \Vec{v}_B'})=\cos^{-1}\left(\Re{\sum_{k=1}^{N-1}v_{kA}'^* v_{kB}'}\right)\,,
\end{align}
where $\Re{\Vec{v}_A'^\dagger \Vec{v}_B'}$ is the real part of the inner product of the vectors $\vec v'_A$ and $\vec v'_B$. 

Any convex linear combination of these two metrics is itself a metric, and could therefore be a metric on the phase space manifold. However, the phase space volume element is fixed, which constrains the determinant of the metric. By demanding that the metric on the phase space reproduces the dependence on multiplicity $N$ in the volume element of Eq.~\eqref{eq:ndepvol}, we can express the phase space metric between two events $A$ and $B$ as dependent on a single parameter $c>0$, where
\begin{align}\label{eq:d_Pi}
d_\Pi(\Vec{\rho}_A,\Vec{v}_A';\Vec{\rho}_B,\Vec{v}_B')=\sqrt{\frac{1}{16\pi^2} \left( \frac{c}{4} \right)^{\frac{3-2N}{3N-4}}d^2_\Delta(\Vec{\rho}_A,\Vec{\rho}_B)+\frac{1}{4\pi^2} \left( \frac{c}{4} \right)^{\frac{N-1}{3N-4}}d^2_S(\Vec{v}_A',\Vec{v}_B')}\,.
\end{align}
Varying the value of $c$ then reweights the simplex and sphere contributions to the distance between events.  With only the volume of the phase space as a constraint, there is no natural value for $c$ and all choices give rise to equally valid phase space distances. 

Indeed, to ensure that $d_\Pi$ is a metric on the phase space, we must further enforce permutation invariance of the particles in the events.  Specifically, if the events were identical, $A = B$, then for two distinct orderings of the particles, the metric as defined now would in general be non-zero.  This would immediately fail the identity of indiscernibles requirement, and so we must additionally enforce that the metric is minimized over all permutations of particles in the events.  In practice, global minimization over $N!$ permutations is exponentially computationally intensive and unfeasible even for reasonably small $N$. In Sec.~\ref{sec:Implementation}, we will discuss simple preprocessing steps based on physical considerations that can be implemented to approximate the global permutation minimum, and, at the very least, ensure that the distance from an event to itself is 0.

In principle, one can further generalize the above definitions of the metrics on the simplex, the hypersphere, and their resulting product space, as long as the phase space volume element is preserved. This would give us freedom to modify the relative distance between events in different regions of the phase space. For example, one could define the distance such that more emphasis is put on the edge of the phase space in comparison to its bulk. Such generalized metrics may potentially lead to improved discriminative power than the ones defined above. However, we do not pursue this study here, in order not to unduly complicate our analysis. For us, our metrics on the simplex, sphere, and the phase space are simply defined by Eq.~\eqref{eq:d_delta}, Eq.~\eqref{eq:d_S}, and Eq.~\eqref{eq:d_Pi}, respectively.

\subsection{Phase Space in the Collinear Limit} \label{subsec:CollinearPS}

The phase space for an isolated jet---as a collimated, high-energy collection of particles---has its many distinct features. Our starting approximation for a jet is in the infinite momentum limit, in which momentum transverse to its axis is parametrically smaller than the momentum along the axis \cite{Fubini:1964boa,Weinberg:1966jm,Susskind:1967rg,Bardakci:1968zqb,Chang:1968bh,Kogut:1969xa}.  In this limit, there is no Lorentz transformation that can be performed to boost to the rest frame of the jet, and so it requires a new analysis of the corresponding manifold on which particle momenta live.

Without loss of generality, we assume that the net momentum of the jet lies along the $+\hat z$ axis and introduce the lightcone momentum components $p_0+p_3$ and $p_0-p_3$. For our purposes here, we define a jet as a collection of particles for which their net $p_0+p_3$ component is parametrically larger than the $p_0-p_3$ component, i.e., $p_0+p_3 \gg p_0-p_3$.  To find a jet, we require some net $p_0+p_3$ but are inclusive to the value of $p_0-p_3$, without further restrictions on the structure of the jet.  Additionally, as the axis of the jet is defined to be the direction of net momentum, the net momentum transverse to the axis is 0.  We denote these components of momentum transverse to the jet axis as $\vec p_\perp$.

With this set-up, the differential phase space volume of a jet that consists of $N$ massless particles is then:
\begin{align}
    \label{eq:volume_form_jet_2}    
    d\Pi_N^{(J)}=&(2\pi)^{4-3N}\left[\prod_{k=1}^Nd^4p_k\,\delta^+(p_k^2)\right]d(p_0-p_3)\,\delta\left(p_0+p_3-\sum_{k=1}^N(p_{0k}+p_{3k})\right)\\
    &\hspace{1cm}\times\delta\left(p_0-p_3-\sum_{k=1}^N(p_{0k}-p_{3k})\right)\delta^{(2)}\left(\sum_{k=1}^N\Vec{p}_{\perp k}\right).\nonumber
\end{align}
As with phase space in the CM frame, we introduce vectors $\vec u$ and $\vec v$ to automatically incorporate the on-shell constraints, where now the individual components are:
\begin{align}
& u_k = \sqrt{\frac{p_{0k}-p_{3k}}{p_0-p_3}}\,, 
& v_k = \frac{p_{1k}+ i\, p_{2k}}{\sqrt{(p_0+p_3)(p_{0k}-p_{3k})}}\,.
\end{align}
Again, these vectors are dimensionless in the mass unit, which is preferred because to good approximation jets are scale invariant. Phase space volume in these coordinates then becomes
\begin{gather}
\begin{split}
    d\Pi_N^{(J)}=&(2\pi)^{4-3N}(p_0+p_3)^{2N-4}\left(\frac{p_0-p_3}{p_0+p_3}\right)^{N-2}\frac{d^Nu\,d^Nv}{\text{U}(1)^N}\,d(p_0-p_3)\\
    &\times\delta(1-|\Vec{u}|^2)\,\delta(1-|\Vec{v}|^2)\,\delta^{(2)}\left(\Vec{u}^\dagger\Vec{v}\right).
\end{split}
\end{gather}

Here we can go a bit further and simplify the phase space volume by incorporating some additional physics. In the soft and/or collinear limit that dominates the description of a jet, perturbative QCD is approximately scale-invariant, which imposes structure on the squared matrix element $|\mathcal{M}|^2$ as a function of the present scales. By dimensional analysis, the squared matrix element for $N$ final state particles must have overall dependence on $p_0+p_3$ as
\begin{align}
    |\mathcal{M}|^2 \propto (p_0+p_3)^{4-2N}\,,
\end{align}
in order to ensure that probabilities are dimensionless in the mass unit. Further, assuming scale invariance, probabilities are unchanged if the ratio
\begin{align}
\chi \equiv \frac{p_0-p_3}{p_0+p_3}\,,
\end{align}
is rescaled as $\chi \to \lambda \chi$, for any $\lambda > 0$, and so the matrix element must have dependence on $\chi$ as
\begin{align}
    |\mathcal{M}|^2 \sim \chi^{1-N} (p_0+p_3)^{4-2N}\,.
\end{align}
Including this knowledge of the matrix element, the phase space now becomes
\begin{align}\label{eq:jetpsvol}   
d\Pi_N^{(J)}\,|{\cal M}|^2 \sim (2\pi)^{4-3N}\,d\ln\chi\,\frac{d^N u\, d^N v}{\text{U}(1)^N}\, \delta\left(1 - |\vec u|^2\right)\delta\left(1 - |\vec v|^2\right)\delta^{(2)}\left(\vec u^\dagger \vec v
\right)\,.
\end{align}

If one is inclusive over particle number $N$, then this exact scale invariance in QCD is modified by an anomalous dimension (and the running coupling), proportional to $\alpha_s$. In general, the matrix element will exhibit singularities on the phase space when pairs of particles become collinear, but those details depend precisely on the specific dynamics of the jet ensemble of interest.

One can now proceed to construct explicit global coordinates on the jet's phase space manifold. As in the case of events in the CM frame, we will have coordinates $\rho,v',\xi,\eta$ expressed with the same formulae as in Eqs.~\eqref{eq:rho}, \eqref{eq:v'}, and \eqref{eq:xi_eta}. Therefore, we again end up with $(N-1)$ $\rho$ coordinates for the simplex and $(2N-3)$ $\xi$ and $\eta$ coordinates for the hypersphere. For a jet, we will also have an additional $\chi$ coordinate whose form is dictated by the specific structure of the squared matrix element. As a result, the phase space for collinear jets is a product of three parts: the simplex, the hypersphere, and a scale factor.

For two jets $A$ and $B$, the metric of the scale factor $\chi$ is simply the Euclidean metric on the real line:
\begin{align}
d_\chi(\chi_A,\chi_B)=\left|\log(\chi_A)-\log(\chi_B)\right|\,.
\end{align}
Now, with three components to the jet phase space manifold and one constraint on its volume form of Eq.~\eqref{eq:jetpsvol}, we can express the metric as a convex sum in terms of two parameters $c_1,c_2>0$, where 
\begin{align}\label{eq:jetPS_distance}
&d^2_{\Pi^{(J)}}(\Vec{\rho}_A,\Vec{v}_A',\chi_A;\Vec{\rho}_B,\Vec{v}_B',\chi_B)\\
&\hspace{0.5cm}=\frac{1}{16\pi^2} \left( \frac{c_1}{4} \right)^{\frac{3-2N}{3N-3}} \left( \frac{c_2}{16\pi^2} \right)^{-\frac{1}{3N-3}}d^2_\Delta(\Vec{\rho}_A,\Vec{\rho}_B)+\frac{1}{4\pi^2} \left( \frac{c_1}{4} \right)^{\frac{N}{3N-3}} \left( \frac{c_2}{16\pi^2} \right)^{-\frac{1}{3N-3}}d^2_S(\Vec{v}_A',\Vec{v}_B')\nonumber\\
&\hspace{1cm}+ \left( \frac{c_1}{4} \right)^{\frac{3-2N}{3N-3}} \left( \frac{c_2}{16\pi^2} \right)^{\frac{3N-4}{3N-3}}d_\chi^2(\chi_A,\chi_B),
\nonumber
\end{align}
where $d_\Delta$ is again given by Eq.~\eqref{eq:d_delta} and $d_S$ by Eq.~\eqref{eq:d_S}.  In some jet studies, there is an expected natural mass scale, for example, when searching for resonances that decay hadronically, and in those cases, one would impose a narrow cut on the jet mass.  As such, the value of $\chi$ would vary very little on a jet-by-jet basis in the relevant ensemble, and so the contribution to the metric from $d_\chi$ can typically be ignored, as is done in our later analysis. Additionally, same as events in the CM frame, permutation invariance of the particles must be imposed externally to ensure that the distance between a jet and itself is 0.

\section{Implementation of the Phase Space Metric for Classification}
\label{sec:Implementation}

In this section, we describe the practical implementation of the phase space metric that we will explore in specific case studies later. It involves significant pre-processing steps that go a long way toward realizing the permutation invariance and exact minimization required for the phase space distance to satisfy the requirements of a true metric.  We also describe the general techniques employed for post-processing the phase space representation of events and jets for classification, where machine learning is introduced to establish boundaries between signal- and background-rich regions. 

Pre- and post-processing of particle physics collider event data for any analysis, with machine learning or not, has long been employed to simplify the representation of the data, see, e.g., Refs.~\cite{Cogan:2014oua,Almeida:2015jua,deOliveira:2015xxd,Guest:2016iqz,Louppe:2017ipp,Cheng:2017rdo,Datta:2017rhs,Komiske:2017aww,Butter:2017cot}.  Typically, pre-processing is done in an {\it ad hoc} manner, modding out by symmetries that are expected to hold to good approximation, which is sometimes challenging to justify quantitatively. On the phase space manifold with explicit global coordinates and a local metric, these pre-processing steps are motivated by the necessary minimization over particle ordering and orientation to enforce the properties of the metric.  The steps discussed here can therefore inform a principled pre-processing prescription in other contexts.

There are necessarily many moving pieces in simplifying the event representation, evaluating the metric, and eventually classifying events. In the interest of clarity, we dedicate a subsection to each main processing step. Note that here we refer to both full collision events and individual jets excised from events as ``events'' and their clustered constituents as ``jets'', with the understanding that the appropriate description of the phase space is used depending on the particular application.

\subsection{Pre-processing of Collider Events}
\label{subsec:Preprocess}

We first discuss the pre-processing to perform on events. Such pre-processing steps ensure that the phase space distance between an event and itself vanishes.

\subsubsection{Event Representation} \label{subsubsec:Repre}

Our formulation of a metric on the phase space explicitly fixes the particle multiplicity $N$.  The first step is therefore to map any given event onto an $N$-body phase space.  We accomplish this by clustering the event (jet) into $N$ jets (subjets) with the exclusive $k_T$ clustering algorithm \cite{Catani:1991hj,Catani:1993hr,Ellis:1993tq}, where $N$ is a pre-specified fixed number. Further, the formalism requires each of the $N$ constituents to be massless. We therefore manually substitute the energy of each jet by the new value $E = |\vec p|$, and verify that the ratio $m /E$ for each resulting object is indeed almost exactly zero (allowing for machine precision). 

For small $N$, we expect that this masslessness restriction is {\it not} a good approximation, as each of the $N$ constituents may consist of relatively hard, relatively wide-angle particles that generate a relatively large mass for the constituent as a whole.  However, as multiplicity $N$ increases, it is expected that this masslessness restriction weakens as small clusters of hadrons are resolved that have high energy, but are collinear with one another.  This is justified {\it a posteriori} by plotting the distributions of the ratio $m/E$ for the jets as a function of $N$ (before manually setting $m$ to be 0), which we show later in Sec.~\ref{sec:PS4Collider} for the specific classification tasks.  In principle, mass can be incorporated in the phase space manifold and calculations in perturbative QCD always assume massless partons.  We leave a detailed analysis of the limitations of this massless approximation to future work.

\subsubsection{Particle Ordering} \label{subsubsec:Ord}

Events on which only energies and momenta are measured have no natural ordering of their constituent particles; so any function of them must be permutation invariant.  For the phase space metric, this permutation invariance should in principle be implemented as a minimization of the distance between two events over all possible permutations of particles, which ensures that the metric then satisfies both the triangle inequality and the identity of indiscernibles.  

However, for every pair of events with $N$ particles, we would have to scan over all possible $N!$ distinct orderings to identify the minimum.  Such an exhaustive search is clearly computationally impossible even for moderate $N$. While minimization over an exponentially-large number of permuations is ripe for machine learning, here we propose instead a much more pedestrian approach that is provably correct for the simplex submanifold and thus motivates the prescription on the entire phase space.

Recall that the metric on the simplex is the Euclidean metric in $N$ dimensions:
\begin{align}
d^2_\Delta(\Vec{\rho}_A,\Vec{\rho}_B)=\min_{\sigma\in S_N}\sum_{i=1}^N(\rho_{iA}-\rho_{\sigma(i)B})^2\,.
\end{align}
Of course, the simplex is $N-1$ dimensional by energy conservation, but we will work with this manifestly permutation invariant form.  Also, here we have explicitly included the minimization over the permutations of particles in event $B$, where $\sigma$ is an element of the symmetric group $S_N$.  Without loss of generality, we can assume that the particles in event $A$ are, say, ordered monotonically decreasing in $\rho$ value.  Then, the permutation of the particles of event $B$ that minimizes this distance is also when the particles in event $B$ are ordered in decreasing $\rho$ value.  That is, 
\begin{align}
d^2_\Delta(\Vec{\rho}_A,\Vec{\rho}_B)=\sum_{\substack{i=1\\ \rho_{iA}>\rho_{(i+1) A}\\ \rho_{iB}>\rho_{(i+1) B}}}^N(\rho_{iA}-\rho_{iB})^2\,.
\end{align}
We provide an elementary proof that this ordering does indeed render the distance a metric on the simplex in App.~\ref{app:euclidminorder}.

The story is more complicated for the hypersphere, since there is a rotation from $v$ to $v'$ coordinates. Of course, ultimately $d^2_\Pi$ is the correct distance to be minimized for each pair of events. Since $d^2_\Pi$ is a convex linear combination of the squared metrics on the simplex and the hypersphere, ordering particles according to their $\rho$ value does not guarantee minimization over permutations of the complete metric on the phase space.  

Nevertheless, this pre-processing step is very simple and computationally cheap, and at the very least approximates a local minimum over permutations.  To practically implement this ordering, recall that the definition of $\rho$ is
\begin{align}
\rho = p_0+p_3\,,
\end{align}
i.e., the sum of the energy and $z$-component of the particle momentum. For events in the CM frame, there is possibly an O(3) rotation symmetry of particle momenta.  For events at a hadron collider, this O(3) symmetry is broken to O(2) about the beams and reflection of the beams, but still, the direction of the $+z$-axis is not unambiguously defined.  In our analyses, then, we will consider the following three different particle orderings to probe the sensitivity to the choice of axis:
\begin{itemize}
\item Descending order in $ p_0+p_3$\,,
\item Descending order in $ p_0-p_3$\,,
\item Descending order in $p_T \equiv \sqrt{p_1^2+p_2^2}$\,.
\end{itemize}

Note that there is little a priori reason why one particle ordering is preferred for the full phase space distance. Our proposal here is inspired by the structure of the simplex alone, and we hope that the above physics-grounded orderings already give satisfying performance and further that the performance does not depend critically on the orderings. Future study of a better and more specialized way to search over the particle ordering space will hopefully improve the performance even more.

\subsubsection{Azimuthal Rotation} \label{subsubsec:Rot}

Aside from the particle ordering, there are still minimization ambiguities on the hypersphere that can be addressed with pre-processing.  Recall that the metric on the hypersphere is
\begin{align}
d_S(\Vec{v}_A',\Vec{v}_B')=\min_{\arg(\vec v_B')}\cos^{-1}\left(\Re\left[\vec v_A'^\dagger \vec v_B'\right]\right)\,.
\end{align}
Here the minimization is a bit subtle.  Event-by-event, there is an O(2) azimuthal symmetry about the beam, for which the momentum transverse to the beam can be rotated by a common angle or reflected.  In the metric, this SO(2) rotation invariance is manifest as minimization over the argument of the vector $\vec v_B'$ (we fix the vector $\vec v_A'$ because only their relative phase is important).  We will address reflection invariance shortly.

Because $\cos^{-1}$ monotonically decreases as its argument increases from 0, the minimization over the phase of $\vec v_B'$ is implemented by maximizing
\begin{align}
\max_{\arg(\vec v_B')}\Re\left[\vec v_A'^\dagger \vec v_B'\right] \leq \left|\vec v_A'^\dagger \vec v_B'\right|\,.
\end{align}
Because the real part of this vector dot product is bounded from above by its absolute value, even when maximizing over the argument of $\vec v_B'$, we should just use the absolute value in calculating the metric on the hypersphere.  This prescription will minimize the distance on the hypersphere for any given particle ordering, but does not ensure that the ordering itself renders a global minimum.  Nevertheless, with the ordering prescription on the simplex, this procedure will produce at least a local minimum of the metric on the entire phase space manifold.

To account for reflections in the transverse space, note that a reflection of $p_y \to -p_y$ corresponds to complex conjugation in the vector $\vec v$.  Therefore, to minimize the hypersphere metric over reflections with a given particle ordering, we simply choose the larger of $\left|\vec v_A'^\dagger \vec v_B'\right|$ and $\left|\vec v_A'^\intercal \vec v_B'\right|$.  Again, we just need to reflect (i.e., complex conjugate) vector $\vec v_A'$ in the product because the metric is only sensitive to the relative orientation between events.  

For a given particle ordering, the final form of the metric on the hypersphere is thus
\begin{align}
d_S(\Vec{v}_A',\Vec{v}_B')=\cos^{-1}\left(\max\left[\left|\vec v_A'^\dagger \vec v_B'\right|,\left|\vec v_A'^\intercal \vec v_B'\right|\right]\right)\,.
\end{align}
Crucially, the particle ordering on the simplex ensures that the distance between an event and itself vanishes, and the above prescription for minimization over azimuthal symmetries ensures that the distance on the hypersphere also vanishes. Taken together, these pre-processing prescriptions explicitly ensure that the metric on the phase space satisfies the quality of identity of indiscernibles.

\subsection{Processing the Phase Space Manifold}
\label{subsec:Process}

With the pre-processing steps defined above, we now turn to discuss practical evaluation of the coordinates on the phase space manifold and the corresponding distances between realistic events. Here we encounter the additional freedom in the definition of the simplex and sphere coordinates.

If events enjoy invariance to O(3) transformations of the celestial sphere and all particles are detected, then there is a continuous infinity of possible choices for the $\vec u$ and $\vec v$ coordinates, all of which are equally valid.  This is because, in the center-of-mass frame, the net momentum is 0 and so the net momentum transverse to any fixed axis is also 0; therefore any axis can legitimately serve as the $z$-axis, leading to a valid metric.  

However, here we either consider events produced at a hadron collider, for which numerous particles are lost down the beam region and undetected, or individual jets for which a center-of-mass frame does not exist.  These systems explicitly break the infinite degeneracy of coordinate systems and reduce the $\vec u$ and $\vec v$ coordinates to just a couple of choices that respect the symmetries of the events in the lab frame.

These possible choices are exclusively set by the definition of the $\vec u$ coordinates on the simplex, or equivalently, $\rho$.  Given fixed coordinates for which the $\hat z$-axis lies along the beams and one beam is fixed to be in the $+\hat z$ direction, there are two possibilities:
\begin{align}
\rho = p_0+p_3\,, \qquad \text{or}\qquad \rho = p_0-p_3\,.
\end{align}
Note that either choice produces a valid metric, because there is no absolute frame for the coordinates on the phase space. In principle, any valid definition of the phase space metric contains the same information content and therefore should be equally powerful for the downstream classification. Later we will study the response of a few classification tasks with both of the above choices of the $\vec u$ coordinates, denoted as ``\texttt{+ on simplex}'' and ``\texttt{- on simplex}'' accordingly.  

Note that this choice of coordinates ensures that the momentum transverse to the $z$-axis, which defines the coordinates $\vec v$, is (very nearly) 0, up to the small residual transverse momentum lost down the beams.  However, the visible particles in general have a large net momentum along the $z$-axis, which will need to be boosted away to render the events in the center-of-mass frame.  Such a boost does not affect the direction of the $z$-axis, and further the $\rho$ coordinate transforms homogeneously under a $z$-axis boost, i.e.,
\begin{align}
\rho \to e^{\eta} \rho\,,
\end{align}
where the (pseudo)rapidity of the visible particles in the lab frame is $\eta$.

Given a choice for the $\vec u$ coordinates, the $\vec v$ coordinates are then unique.  Conservation of transverse momentum requires that $\vec v$ is transverse to $\vec u$, i.e., $\vec u^\dagger \vec v = 0$; so $\vec v$ must be formed from a complex linear combination of $p_1$ and $p_2$.  We can in principle rotate the $x$ and $y$ axes about the $z$ axis however we desire, naively producing another continuous infinity of axis choices.  However, focused on ultimately evaluating the metric distance between events, we have already optimized this azimuthal rotation in the implementation of the metric on the hypersphere.  Global event rotations about the $z$ axis, as would be used to define new $(x,y)$ axes, cannot affect the metric distance. Therefore we can happily take whatever axes are provided to us in the representation of the data.

For studies on jets, we simply note that all of the above considerations apply there as well. The only modification is that we set the jet axis to be the $+\hat z$ direction, and the two choices of $\rho$ become the only legitimate ones, because the covariant phase space derivation relies on there being zero total momentum transverse to whatever momentum direction appears in the simplex coordinate; otherwise, we would have $\vec u^\dagger \vec v \neq 0$. We can then proceed to compute the phase space coordinates according to the prescriptions in Sec~\ref{sec:PSGeometry}.

\subsection{Post-processing for Classification}
\label{subsec:Postprocess}

After obtaining the simplex and the sphere distances, further post-processing steps are necessary in order to accomplish the task of supervised event classification. First, one needs to choose the relative weighting between the simplex and sphere parts (and $\chi$ if included) to define the total phase space metric. One then needs to pick a classifier which takes the phase space distance as the input. The two steps are indeed coupled together in the sense that a good choice of the relative weighting is determined solely by the performance of the classifier. In principle, different classifiers may prefer different values for the relative weighting, although one would expect a good classifier to deliver relatively stable performance across a range of reasonable values for the relative weighting.

\subsubsection{Relative Weighting between the Simplex and the Sphere} \label{subsubsec:Weight}

To define the phase space distance between any pair of events, we need to specify the free coefficient $c$ in Eq.~\eqref{eq:d_Pi} (or $c_1$ and $c_2$ in Eq.~\eqref{eq:jetPS_distance}). To simplify our analysis, here we do not consider variation of the weighting in front of $\chi$. Therefore, the only free parameter to tune is $c$ in Eq.~\eqref{eq:d_Pi}. 

Obviously, there are infinitely many choices and any finite value for $c$ gives rise to a valid metric. In a perfect world with infinite computational power, we would want to continuously vary the weighting to maximize tagging performance, quantified in a single number called Area under the ROC Curve (AUC), where an AUC of 1 means a perfect classifier and a value close to 0.5 signifies a random guess.  

Fortunately, varying the weighting does not entail redoing the actual distance matrix computation. Rather, we can just take the existing two distance matrices (one for the simplex and the other for the hypersphere) and create different linear combinations of them. In practice, we need to ensure relatively balanced contributions from each part to the total phase space distance. Therefore, in our study we only consider three reasonable values for the relative weighting, i.e., sphere : simplex = $1:4$, $1:1$, and $4:1$.

\subsubsection{Machine Learning Classifier Choice} \label{subsubsec:MLmodel}

In this study, we choose Support Vector Machine (SVM) \cite{cortes1995support} as the classifier for event and jet tagging, thanks to its simplicity and robustness.\footnote{We have also explored the performance of the $k$-nearest neighbors ($k$NN) algorithm, another simple and robust supervised learning classifier. We find that SVM consistently outperforms $k$NN in all classification tasks, which presumably reflects the natural clustering of events of different types on the phase space manifold.} In general, we expect any algorithm (including neural networks) that can take distance matrices as inputs to work well with the phase space formalism, and encourage further exploration of more advanced methods.

SVM separates data points into two classes by finding a hyperplane whose distances to the nearest data points are maximized on both sides. The generalized notion of distance is known as kernel. We choose the commonly-used Radial Basis Function (RBF) kernel \cite{powell1977restart,broomhead1988multivariable} $k(\Vec{x}_A;\Vec{x}_B)=\exp\{-\gamma d(\Vec{x}_A;\Vec{x}_B)^2\}$, where in our case $d(\Vec{x}_A;\Vec{x}_B)=d_\Pi(\Vec{\rho}_A,\Vec{v}'_A;\Vec{\rho}_B,\Vec{v}'_B)$ or $d(\Vec{x}_A;\Vec{x}_B)=d_\Pi(\Vec{\rho}_A,\Vec{v}'_A,\chi_A;\Vec{\rho}_B,\Vec{v}'_B,\chi_B)$ is the phase space distance. 

There are two hyperparameters $C$ and $\gamma$ in the SVM model. The kernel parameter $\gamma$ can be interpreted as the inverse of the radius of influence of training data points. On the other hand, the soft margin parameter $C$ controls the relative importance between correct classification and large separation, with a small $C$ valuing a large separation between the optimal hyperplane and nearest data points over correctly classifying all data points and vice versa.

For the event and jet classification tasks described in Sec.~\ref{sec:PS4Collider}, we generate 10k events or jets in each case, and further separate them into a training dataset of size 5000, a validation dataset of size 2500 for choosing the best model hyperparameters (i.e., $\gamma$ and $C$), and a test dataset of size 2500 to evaluate the model performance. We restrict the ranges of hyperparameters to be $C\in[10^{-2},10^3]$ and $\gamma\in[10^{-2},10^5]$ with a multiplication of 10 in each step. These values are chosen based on empirical observations with the goal to include the best combination of hyperparameters.

\section{Phase Space Manifold for Event Classification and Jet Tagging}
\label{sec:PS4Collider}

The separation of signal events from backgrounds is a key task for the search of new physics beyond the Standard Model (BSM) at the LHC. Oftentimes, the majority of background processes are made up of QCD events, whose dynamics in the perturbative regime is dominated by the emission of soft and collinear quarks and gluons due to the small 't Hooft coupling $\lambda = g_s^2 N_C$ of QCD at high collider scales. Much of the kinematic information from this initial perturbative showering is retained through subsequent hadronization, and the final state particles of QCD events, mostly populated by light mesons, display jet-like structure that occupy some certain marked region in the phase space manifold. 

In contrast, various kinds of signals (including both SM and BSM events) may show distinctively different phase space structures due to the unique nature of their respective underlying physics. For example, top-quark pair production, a typical SM process, gives rise to an event configuration of six prongs on average, inducing a more uniform distribution than the QCD background. This will in turn differentiate the phase space region occupied by $t \Bar{t}$ events from that of QCD events. 

One can further consider exotic BSM scenarios where high-multiplicity events are produced whose final state particles have uniform distribution on the $N$-body phase space. Indeed, many models can produce such high-multiplicity events. One general class is Hidden Valley models \cite{Strassler_2007}, which emerges from solutions to string theory constructions and the hierarchy problem. With large 't Hooft couplings in contrast to QCD, Hidden Valleys can yield uniform radiation patterns and is a good signature to compare against SM backgrounds such as QCD events. 

Here we demonstrate the utility of the phase space distance in three classification tasks: event-level classification of top pair production vs.~QCD dijet events in Sec.~\ref{subsec:tQCDEvents}; event-level classification of a Hidden Valley scenario with high-multiplicity uniformly-distributed events vs.~QCD dijet background events in Sec.~\ref{subsec:QCDRamboEvents}; and jet-level classification of  two-pronged boosted $W$ boson jets vs.~QCD background jets in Sec.~\ref{subsec:WQCDJets}. The event-level classification tasks are tailored in part for comparison with benchmarks for the event isotropy variable explored in Ref.~\cite{Cesarotti:2020hwb}.

For each classification task, we examine different definitions of the phase space distances, as detailed in Sec.~\ref{sec:Implementation}. Specifically, we try both definitions of the $\vec u$ coordinates, i.e., ``\texttt{- on simplex}'' and ``\texttt{+ on simplex}'', as well as three relative weightings between the simplex and the sphere parts, denoted as ``\texttt{sphere:simplex = 1:4}'', ``\texttt{sphere:simplex = 1:1}'', and ``\texttt{sphere:simplex = 4:1}''. This gives us 6 sub-tasks, all giving rise to equally valid phase space distances. One would expect the classification performance to vary minimally no matter which distance definition is used. This is because the exact same information, though combined in different ways, is encoded in any valid phase space distance and a good classifier should be able to utilize them equally well. Our intuition is largely supported by the three case studies. 

In contrast, within each sub-task, the choice of particle ordering and the number of jets (subjets) representing an event (jet), i.e., $N$, do make a material difference, with the former giving different approximations to the exact phase space distance and the latter defining distinct phase space manifolds themselves. The particle ordering and (sub)jet number that maximize tagging performance vary from case to case depending on the underlying collisions. In general, the masslessness assumption for the individual components in an event holds to better accuracy as $N$ grows higher. One therefore expects the current phase space formulation to give a more precise description of the actual events consisting of massive particles under the large $N$-representations, potentially leading to better downstream classification performance with increasing $N$.  

We perform all six sub-tasks for each of the three classification tasks, amounting to $\textbf{2} \text{ ($\vec u$ coordinates definitions)} \times \textbf{3} \text{ (particle orderings)} \times \textbf{8} \text{ (event representations)} = \textbf{48}$ rounds of phase space distance calculations. The relative weighting does not require re-calculation of the simplex and the sphere distances, as they can be combined according to the specific ratio to directly generate the total phase space distance. The calculation of the simplex and sphere distances for the entire dataset of 10k events takes a couple of hours on a single desktop (Intel Core i7-8700 CPU) for $N\leq5$, with computing time increasing sharply for larger $N>10$, e.g. taking about 50 hours for $N=40$. However, we expect optimization of the codes to significantly improve efficiency, and the use of cluster will further decrease computational time given that our method is highly amenable to parallel computing.

Performance is always quoted as the best AUC on the test set with the optimal hyperparameters chosen by the validation set. All results are shown as plots of AUC vs.~$N$-representation, with the actual AUC scores given in tables in App.~\ref{app:tables}.

\subsection{Top Pair Production vs.~QCD Dijet Events}
\label{subsec:tQCDEvents}

Proton-proton collision events at $\sqrt{s} = 14$ TeV are generated in \textsc{madgraph5} 2.9.6 \cite{Alwall_2014}, with top quarks being pair produced via $pp \to t \Bar{t}$, and QCD dijet events generated through $pp\to jj$ where $j$ represents a quark ($u, d, c, s$) or gluon. The top (or anti-top) quarks subsequently decay via $t \to W^+ q$ (or $\Bar{t} \to W^- \Bar{q}$) where $q = b, s, d$ and $\Bar{q}$ represents the corresponding anti-quarks. The $W$ bosons further decay hadronically, i.e., $W\rightarrow jj$, so that the top events in principle have six jets if all $j$'s are resolved. Particles are then hadronized and further decayed in \textsc{Pythia} 8.303 \cite{Sj_strand_2015}, where default tuning and showering parameters are used. Only final state particles with pseudo-rapidity $|y|<2$ in the lab frame are kept. We select events whose scalar transverse momentum is $\sum p_T>400$ GeV and cluster them into jets using \textsc{FastJet} 3.4.0 \cite{Cacciari_2012}. In total, we have 10k simulated events, with about half being signal.

To construct the phase space manifold, we first cluster each event into a certain number of jets using the $k_T$ exclusive algorithm. For each event, only the $N = 3, 4, 5, 10, 15, 20, 25, 30$ hardest jets are retained, where a jet is treated as a pseudo-particle with its kinematic information given by the jet axis and its mass manually set to zero. We then center each event by boosting it to its CM frame, and normalize the $(E, p_x, p_y, p_z)$ of each constituent jet by the total energy of the event when computing the phase space coordinates.

Fig.~\ref{fig:10ktQCDdijetEvents} visualizes a top pair production event and a QCD dijet event randomly selected from our 10k dataset. The events with all the constituent particles are shown first, followed by the various $N$-representations on the $y-\phi$ plane with $N = 3, 4, 5, 10, 15, 20, 25, 30$. Next, to demonstrate the appropriateness of the masslessness approximation, Fig.~\ref{fig:mptratio_tQCD} shows the distributions of each jet's $m/p_T$ on a log-log scale, before jet mass is manually set to zero. Clearly, most jets have negligible mass comparing to their transverse momenta, validating the massless phase space approximation especially at large $N$.

\begin{figure}[t!]
    \centering
    \includegraphics[width=\textwidth]{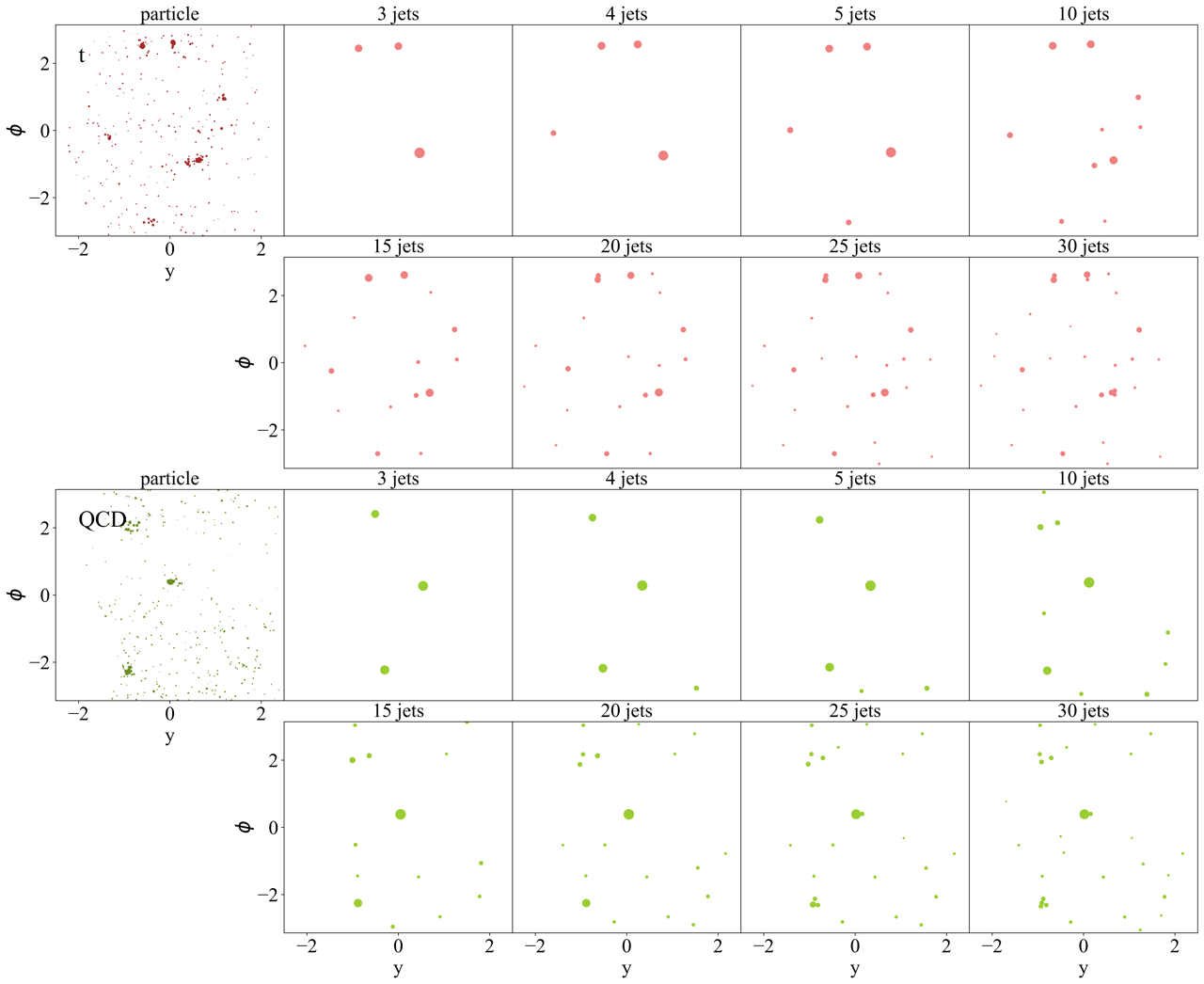}
    \caption{A $t \Bar{t}$ event (red) and QCD dijet event (green), represented by all the final state particles (leftmost plot) or $N$ exclusive jets ($N=3, 4, 5, 10, 15, 20, 25, 30$). The size of each dot is proportional to the $p_T$ of each constituent, plotted on the $y-\phi$ plane.}
    \label{fig:10ktQCDdijetEvents}
\end{figure}

\begin{figure}[t!]
    \centering
    \includegraphics[width=0.45\textwidth]{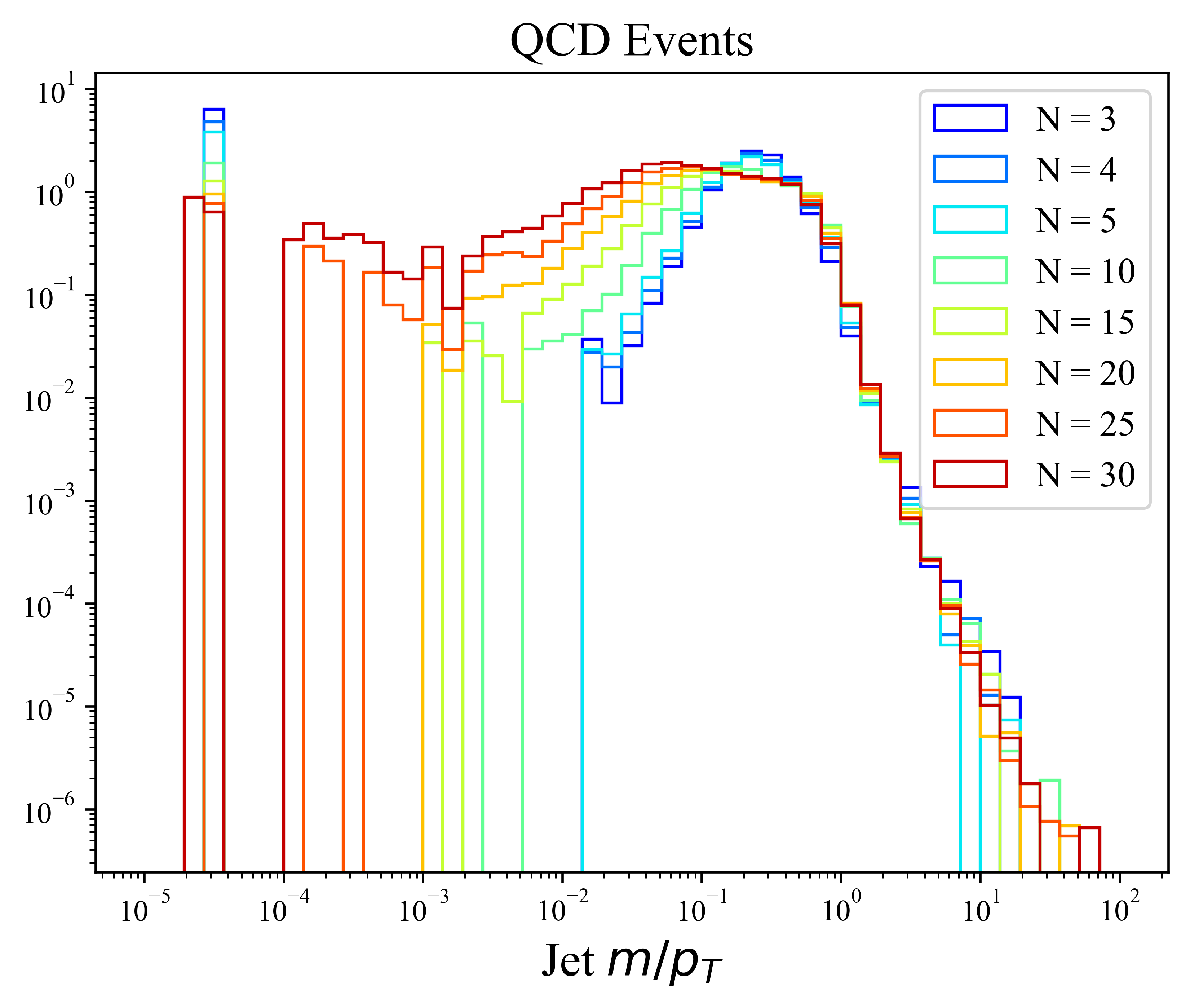}
    \includegraphics[width=0.45\textwidth]{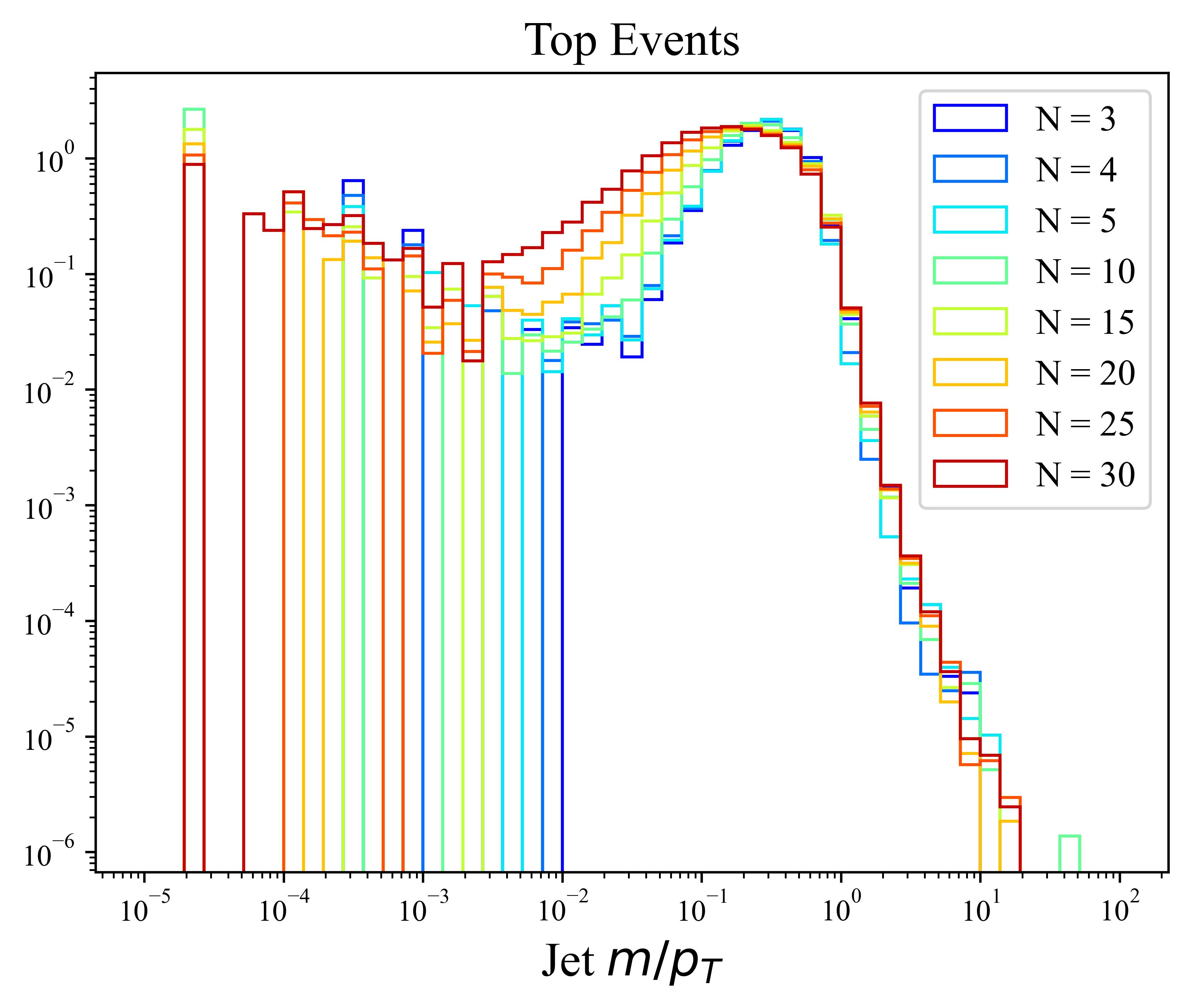}
    \caption{Histograms for each jet's $m/p_T$ (before manually setting $m$ to zero) of the QCD ({\it left}) and top ({\it right}) events. Note that both axes are on a log basis, and the colors denote different numbers $N$ of jets per event.}
    \label{fig:mptratio_tQCD}
\end{figure}

\subsubsection{Results}

We first show some sample histograms of the total phase space distances for our top and QCD events under a particular set of metric definitions (\texttt{- on simplex}, \texttt{sphere:simplex = 1:1}, $(p_0-p_3)$-ordering); see Fig.~\ref{fig:PStotal_tQCD}. Here we also separately plot the pairwise distances between two QCD events, two top events, and a QCD event vs.~a top event. Since the histograms aggregate all pairwise distances, the plots for the distances among different types of events look rather similar, even though the discriminative power of the total phase space distance for top vs.~QCD events is indeed very high. The difference of histograms for different types of events is, however, more noticeable when $N=4, 5$, indicating better classification performance around these $N$ values to be confirmed below.

\begin{figure}
    \centering
    \includegraphics[width=\textwidth]{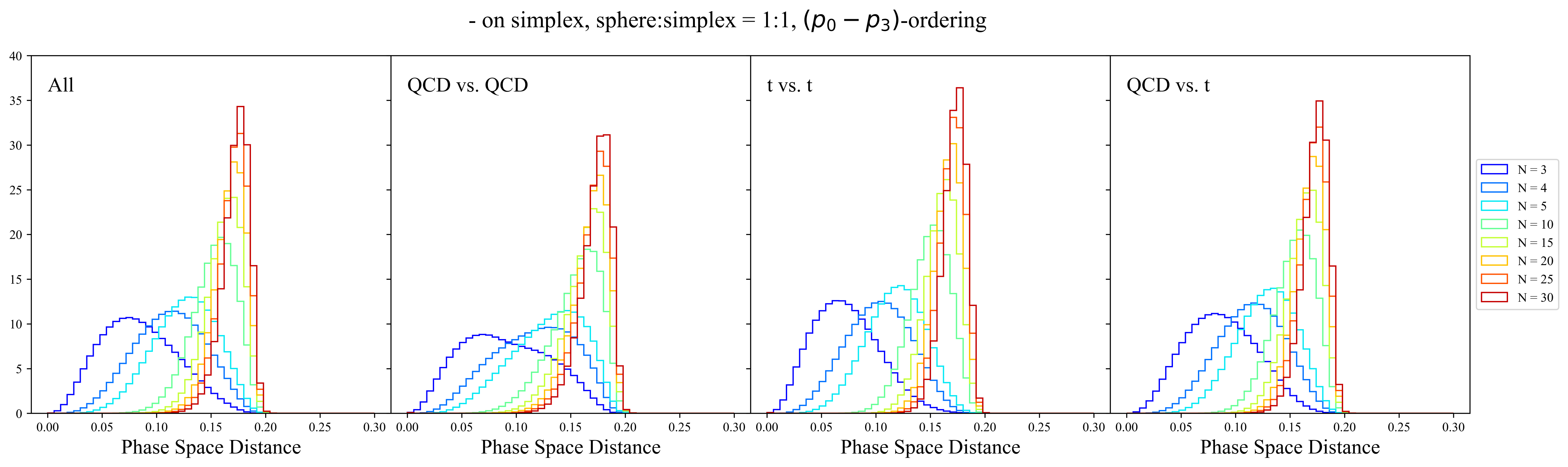}
    \caption{{\it From left to right:} Histograms of the total phase space distances between two arbitrary events in the dataset, between two QCD events, between two top events, and between a QCD event and a top event. The particular set of metric definitions is chosen to be \texttt{- on simplex}, \texttt{sphere:simplex = 1:1}, and $(p_0-p_3)$-ordering. The colors denote different numbers $N$ of jets per event.}
    \label{fig:PStotal_tQCD}
\end{figure}

Fig.~\ref{fig:QCDtevents_l_total_SVM} shows the classification performance of the total phase space distances coupled with SVM. The highest AUC obtained among all definitions is $\sim 0.86$, corresponding to the simplex definition of $\rho = p_0 - p_3$; a relative weighting of \texttt{sphere:simplex = 1:1}; the $(p_0+p_3)$-ordering; and an event representation of $N=4$ jets per event. This is quite remarkable, given that all we have utilized is the ``locations'' on the phase space manifold of different types of events. No explicit information about the underlying collision process is provided to the model. In comparison, the best performing event shape observable studied in Ref.~\cite{Cesarotti:2020hwb}, i.e., the ring-like event isotropy, achieves an AUC score of only 0.774, about $10\%$ lower than our top performer. Of course, caution must be taken to interpret this comparison, given that our datasets are not the same and that Ref.~\cite{Cesarotti:2020hwb} employed only a simple cut on the single observable whereas here we use SVM.

We make three overall observations for the performance of the phase space metric. First, as noted above, both the definitions of the simplex coordinates and the relative weighting between the simplex and the sphere make little difference on the final AUC score, suggesting that any valid phase space distance contains comparable information recoverable by SVM. Of course, this is only true within reasonable variation of the sphere-to-simplex weighting, as extreme weights asymptote to the sphere-only and simplex-only distances.  

Second, there is no obvious best particle ordering for approximating the exact phase space distance. All three orderings render roughly the same classification performance, especially in the low $N$ region ($N \le 10$) where the best AUC is achieved. As more jets are used to represent an event ($N > 10$), the overall performance degrades slightly and in most cases the descending $p_T$-ordering becomes the best choice among the three, giving stable performance across the entire range of $N$. But again, the difference is relatively small to draw statistically significant conclusions about the best particle ordering scheme. One would expect that an exact minimization of the phase space distance over all particle permutations would further improve its tagging power, although a more accurate (in this case smaller) distance does not necessarily entail a higher AUC, as will be shown momentarily.      

Instead, the largest impact on classification arises from event representation by $N$ (massless) jets, which defines the phase space manifold itself with a dimension of $3N-4$. The best performance is achieved when $N \sim 4, 5$ and the performance consistently deteriorates with increasing $N$. As exemplified in Fig.~\ref{fig:10ktQCDdijetEvents}, around 5 jets can capture most of the hard components inside a QCD dijet event or a $t\bar{t}$ event, which matches our theoretical understanding. Adding more objects means including soft particles which mainly originate from QCD for both signal and background, the inclusion of which may confound the downstream classifier. Another explanation could be that higher $N$ entails a phase space with higher dimensionality, potentially making it harder for SVM to classify the data. However, the other two classification tasks will soon show that this is not the case, as there the classification performance indeed increases with larger $N$. The $N$-dependence of classification appears to be more a feature of the underlying events being compared than the dimensionality of the phase space manifold itself.

\begin{figure}[t!]
	\centering
	\includegraphics[width=\textwidth]{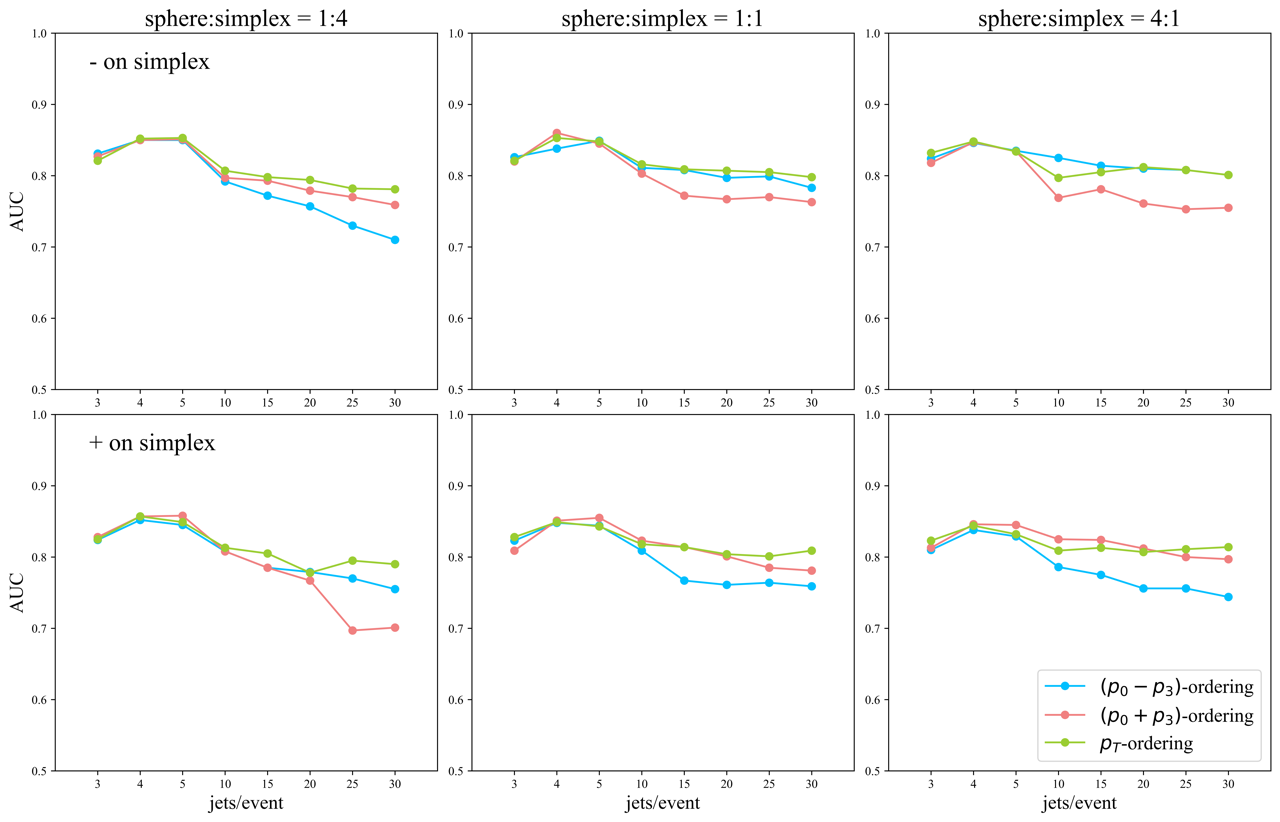}
	\caption{AUC values for QCD dijet events vs.~top pair production events, using SVM as the classifier. The number of jets $N$ to represent each event (i.e., the $x$-axis) is varied to be $N = 3, 4, 5, 10, 15, 20, 25, 30$. Each subplot corresponds to a distinct pair of $\vec u$ coordinates definition coupled with the sphere-to-simplex weighting ratio. The blue, red, and green curves represent results with constituents ordered by decreasing $(p_0-p_3)$, $(p_0+p_3)$, and $p_T$, respectively.}
	\label{fig:QCDtevents_l_total_SVM}
\end{figure}

We now zoom in to examine the simplex and sphere submanifolds separately under the two $\vec u$ coordinates definitions. Fig.~\ref{fig:QCDtevents_Sphere11_l_simplex&sphere_SVM} shows the AUC scores if the input distance matrix only consists of either the simplex or the sphere part of the phase space, which is equivalent to setting \texttt{sphere:simplex} as \texttt{0:1} or \texttt{1:0}, respectively. The goal is to disentangle the information content contained in the two submanifolds and see which part contributes most to the classification task.

For both simplex and sphere parts, the general trend of the AUC scores follows that of the total phase space, with performance again peaking around $N \sim 4, 5$ and gradually declining with larger $N$ by about $10 \sim 15\%$. Here the sphere part of the phase space contains more discrimination power than the simplex, where the best achievable AUC by the sphere distance is more than $20\%$ higher than that of the simplex distance under the same definitions. Indeed, the sphere distance alone gives almost comparable AUC as the total phase space distance, with a mere $\sim 3\%$ difference between their respective optimal scores. The major difference, however, is that the total phase space distance depends less on the specific particle ordering, whereas the difference among the three orderings is much more pronounced for the two submanifolds, especially on the simplex.   

Interestingly, given a particular definition of the $\vec u$ coordinates, the best particle ordering corresponding to the exact minimum distance on the simplex nonetheless performs the worst when only the simplex distance matrix is inputted to the model. For example, with $\rho$ defined to be $(p_0 + p_3)$ (i.e., \texttt{+ on simplex}), the particle ordering that minimizes the distance on the simplex is the $(p_0 + p_3)$-ordering, which consistently gives the lowest AUC scores across all values of $N$ for the simplex alone. In this case, it is always the $(p_0 - p_3)$-ordering that delivers the best performing simplex distance, quite contrary to our intuition. 

On the other hand, the particle ordering that minimizes the simplex distance gives rise to a sphere distance that performs the best for lower $N$ ($N \leq 15$), where there is no obvious reason why one ordering is favored more than another on the sphere. The intermediate $p_T$-ordering turns out to generate the most stable performance over all $N$, both for the simplex alone and for the sphere alone. Such behavior is common to all three classification tasks, and does not appear to be an artifact of the particular events under consideration.

Of course, analyzing the two submanifolds of the phase space separately may not be physically meaningful. Rather one should always use the total phase space distance, in which case all three particle orderings give rise to similar performance with a slight preference for the $p_T$-ordering, thanks to its stability across $N$. Yet one general lesson we can learn from the simplex-only analysis is that an exact minimum distance does not guarantee best performance for classification.

\begin{figure}[t!]
	\centering
	\includegraphics[width=0.8\textwidth]{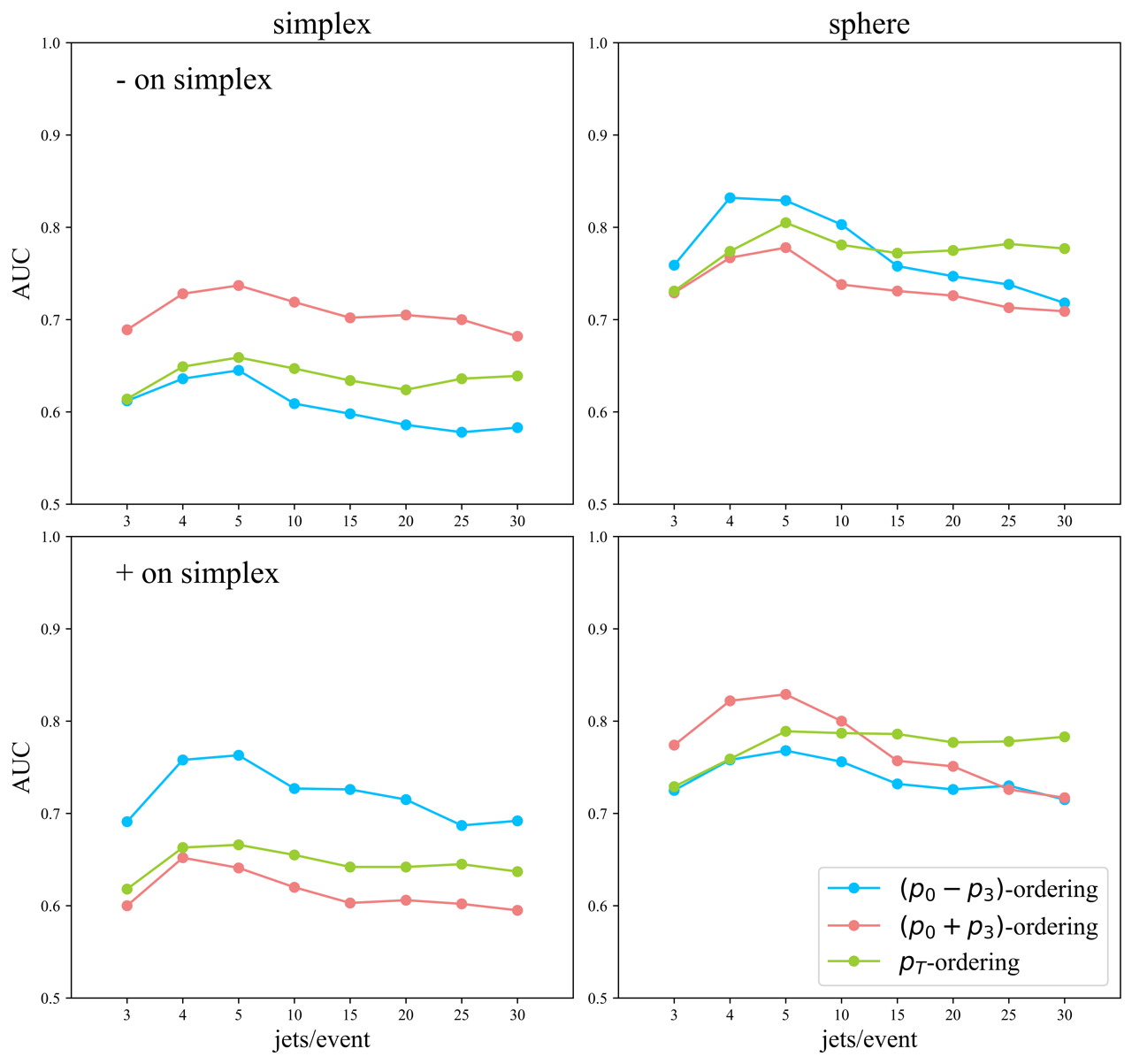}
	\caption{AUC values of QCD dijet events vs.~top pair production events, where the input distance matrix is the simplex ({\it left}) or the sphere ({\it right}) part alone under both $\vec u$ coordinates definitions ({\it upper} for \texttt{- on simplex} and {\it lower} for \texttt{+ on simplex}).}
	\label{fig:QCDtevents_Sphere11_l_simplex&sphere_SVM}
\end{figure}


\subsection{A New Physics Toy Model with Uniform $N$-body Phase Space Events}
\label{subsec:QCDRamboEvents}

We now consider as our signal a Hidden Valley scenario with a heavy mediator $Z'$ that generates high-multiplicity events uniformly distributed in $N$-body phase space. The backgrounds are still QCD dijet events, generated in the same way as before, except that the scalar transverse momentum cut is now raised to $\sum p_T>650$ GeV.  

The Hidden Valley model is generated via \texttt{HiddenValley:ffbar2Zv} in \textsc{Pythia} 8.303 with default tuning and showering parameters. The nominal mass of the $Z'$ is set to be 1 TeV with a width of 20 GeV, and a mass cut of  $m_{Z'} \in [750, 1250]$ GeV. Only final states with $|y|<2$ are kept. The particle $Z'$ is decayed using the \textsc{Rambo} algorithm \cite{KLEISS1986359}, where its $N_{Z'}$ decay products are distributed uniformly in the rest frame of the $Z'$. Here we pick $N_{Z'}=10$ to demonstrate the high-multiplicity of the new physics events. We then boost the decay products of $Z'$ back to the lab frame. After applying again a $|y|<2$ cut for these particles, we combine the remaining decay products with the rest of the \textsc{Pythia}-generated final-state particles in the same event. Finally, we apply a cut $\sum p_T>650$ GeV on the scalar transverse momentum of each event. 

Both signal and background events are clustered into $N$ jets in \textsc{FastJet} using the $k_T$ exclusive algorithm, with $N=5, 10, 15, 20, 25, 30, 35, 40$ hardest jets in each event to form various phase space representations. Fig.~\ref{fig:NewPhysicsEvents} visualizes on the $y-\phi$ plane a new physics event represented as $N$ jets, and Fig.~\ref{fig:mptratio_QCDZ'} again displays the distributions of $m/p_T$ for the jets on a log-log scale. The massless phase space approximation is clearly appropriate for the majority of the events, even at low $N$.

\begin{figure}[t!]
	\centering
	\includegraphics[width=\textwidth]{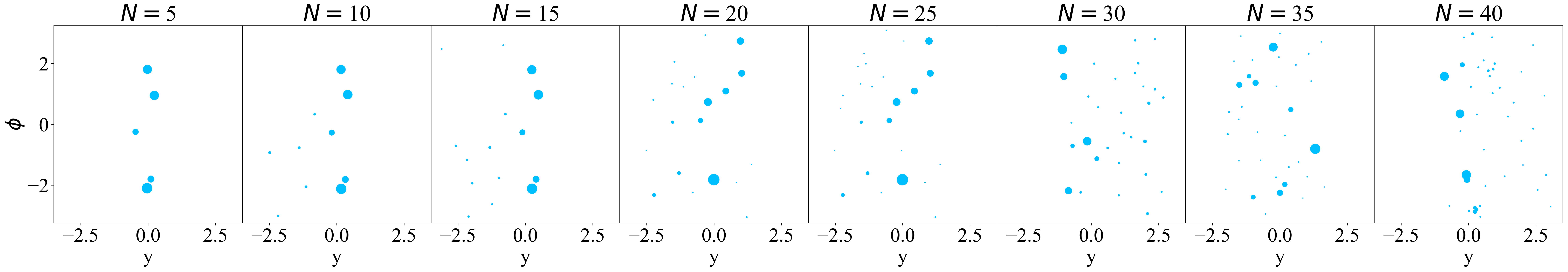}
	\caption{A new physics event on the $y-\phi$ plane clustered into $N = 5, 10, 15, 20, 25, 30, 35, 40$ jets (i.e., the number of dots in each plot), where the size of every dot is proportional to the $p_T$ of the constituent particle.}
	\label{fig:NewPhysicsEvents}
\end{figure}

\begin{figure}[t!]
    \centering
    \includegraphics[width=0.45\textwidth]{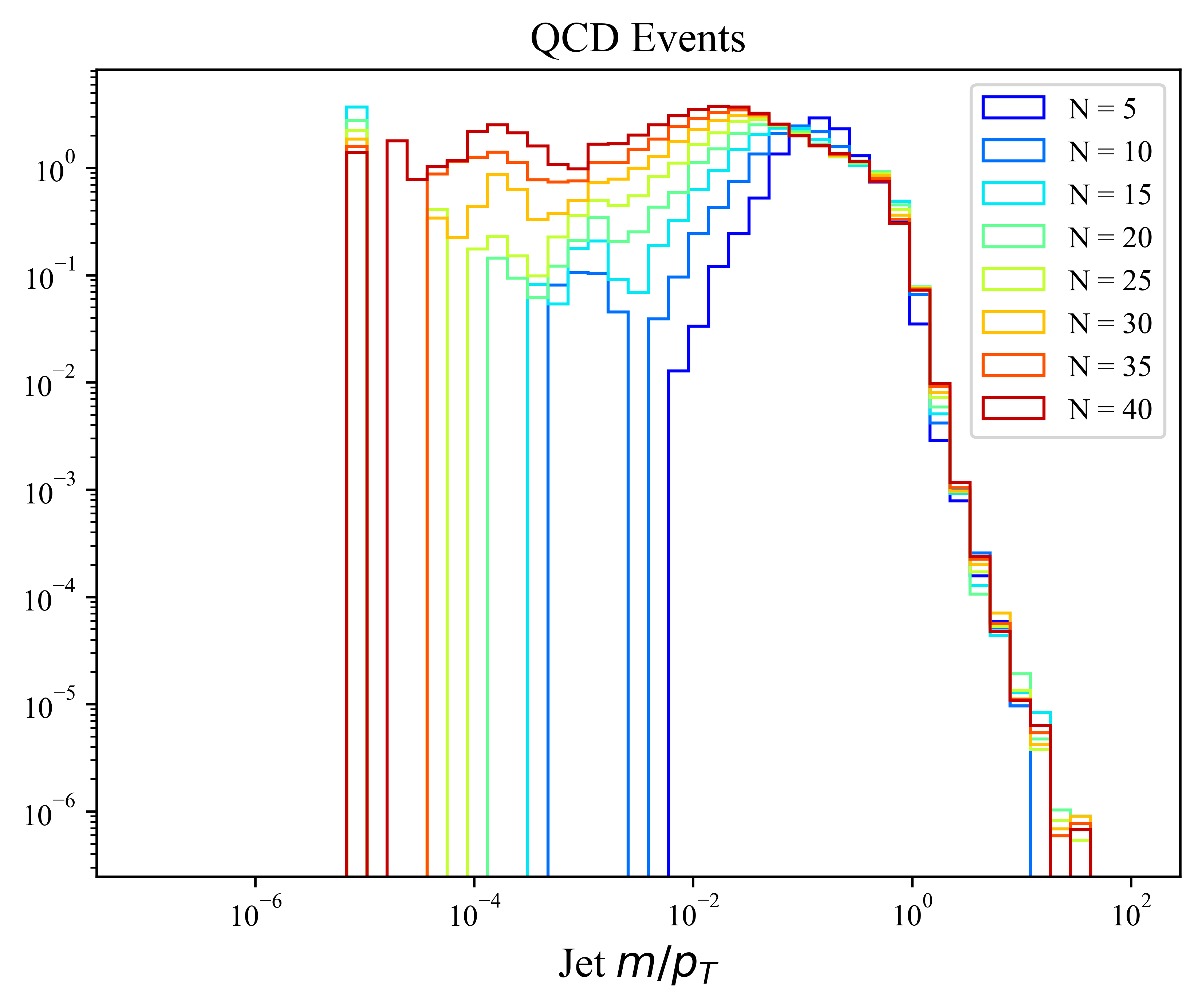}
    \includegraphics[width=0.45\textwidth]{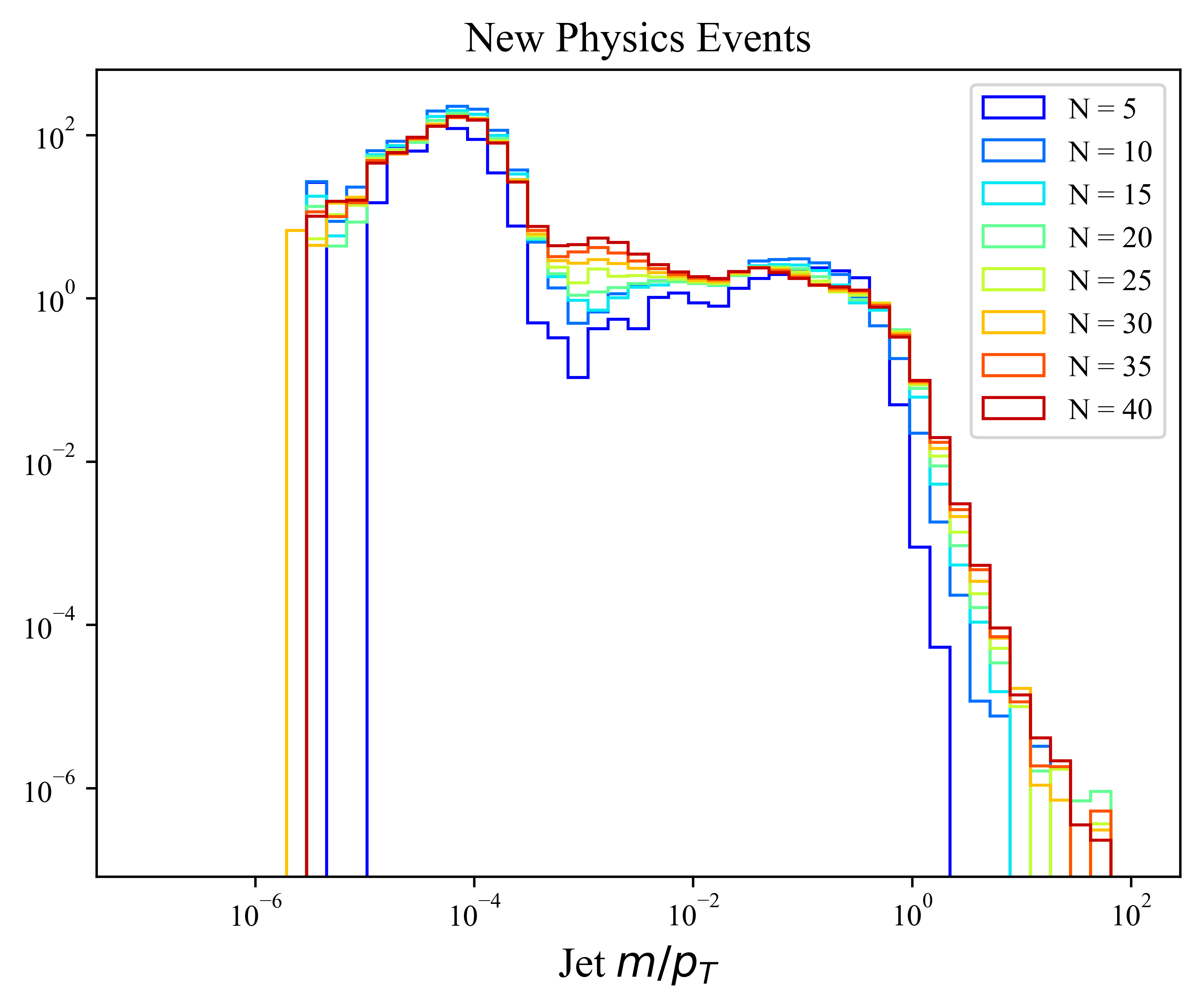}
    \caption{Histograms for each jet's $m/p_T$ (before manually setting $m$ to zero) of the QCD ({\it left}) and new physics ({\it right}) events. Note that both axes are on a log basis, and the colors denote different numbers $N$ of jets per event.}
    \label{fig:mptratio_QCDZ'}
\end{figure}

\subsubsection{Results}

Again, we first show in Fig.~\ref{fig:PStotal_QCDZv} a few sample histograms of the total phase space distances for the QCD dijet events and the new physics scenario under the set of metric definitions (\texttt{- on simplex}, \texttt{sphere:simplex = 1:1}, $(p_0-p_3)$-ordering). The difference in the histograms for the distances among various types of events is now more pronounced, hinting at potentially better classification performance as confirmed below.  

\begin{figure}
    \centering
    \includegraphics[width=\textwidth]{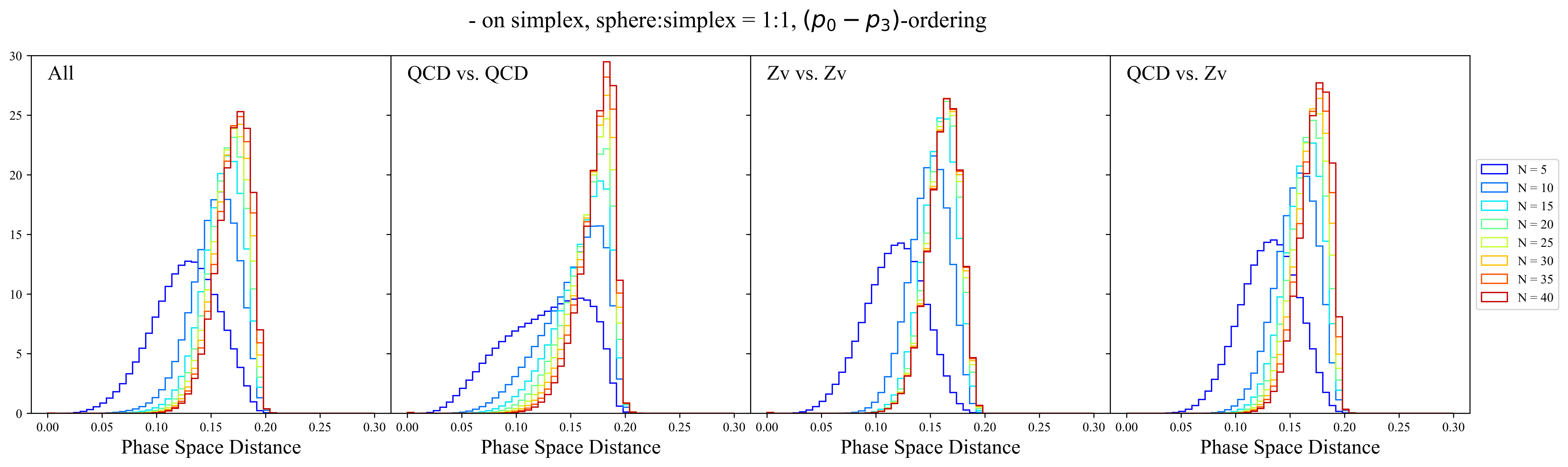}
    \caption{{\it From left to right:} Histograms of the total phase space distances between two arbitrary events in the dataset, between two QCD events, between two new physics events (indicated by ``Zv''), and between a QCD event and a new physics event. The particular set of metric definitions is chosen to be \texttt{- on simplex}, \texttt{sphere:simplex = 1:1}, and $(p_0-p_3)$-ordering. The colors denote different numbers $N$ of jets per event.}
    \label{fig:PStotal_QCDZv}
\end{figure}

Fig.~\ref{fig:QCDZvevents_l_total_SVM} shows the AUC scores of the total phase space distances. The performance is excellent across a wide range of $N$ values, with higher $N$ giving an AUC score close to 1 (i.e., 0.982). This indicates that the current signals and backgrounds are extremely well separable on the phase space manifold, and that the higher the dimension of the phase space, the better the separability. Such observations correspond well with our physics intuition, since the new physics events have a much higher multiplicity than the QCD events. Our near-perfect performance also echoes with the high AUC score of $\sim 0.930$ in Ref.~\cite{Cesarotti:2020hwb}, obtained again by the ring-like event isotropy. Both results indicate the relative ease of this particular classification problem.

For our phase space method, we again note that both definitions of the $\vec u$ coordinates give comparable performance and the sphere-to-simplex relative weighting has little impact on classification, corroborating our hypothesis that any valid phase space distance definition contains equivalent information for a reasonable classifier to utilize. Furthermore, there is very little variation in the final performance for the three particle orderings, suggesting that the local minima found by the approximation schemes are likely very close to the global minimum.     

Unlike the previous top vs.~QCD event classification task, here the performance is relatively stable with respect to the number of jets used to represent an event. Apart from the slight drop at $N \sim 10-15$ when the simplex gets a higher weight, the performance shows a modest yet steady increase when more objects are included, contrary to the trend observed in the top vs.~QCD task. This suggests that the $N$-dependence of classification is a function of the specific underlying physics, rather than the overall properties of the $N$-dependent phase space manifold.

\begin{figure}[t!]
	\centering
	\includegraphics[width=\textwidth]{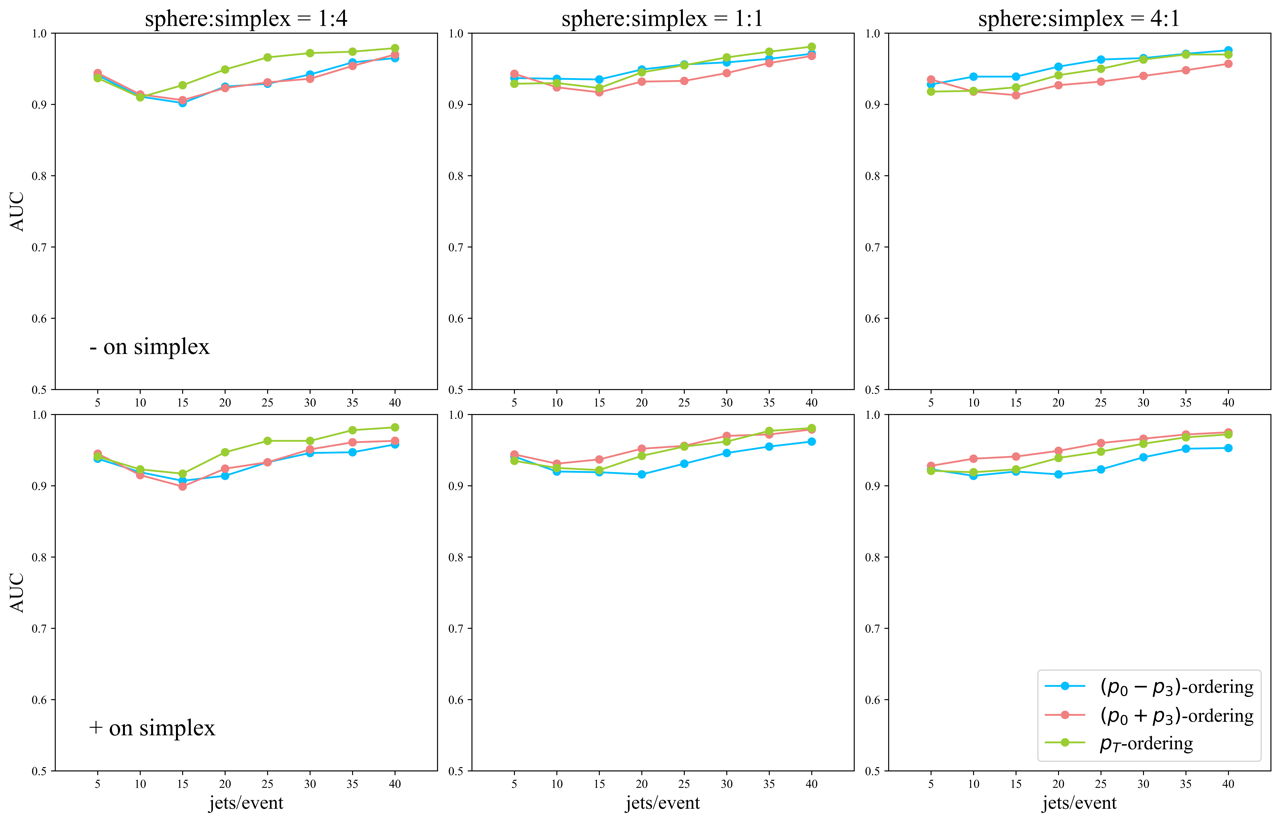}
    \caption{AUC values for QCD dijet events vs.~new physics events, using SVM as the classifier. The number of jets to represent each event (i.e., $x$-axis) is varied to be $N = 5, 10, 15, 20, 25, 30, 35, 40$. Each subplot corresponds to a distinct pair of the $\vec u$ coordinates definition coupled with a specific sphere-to-simplex weighting ratio. The blue, red, and green curves represent results with constituents ordered by decreasing $(p_0-p_3)$, $(p_0+p_3)$, and $p_T$, respectively.}
	\label{fig:QCDZvevents_l_total_SVM}
\end{figure}

Fig.~\ref{fig:QCDZvevents_Sphere11_l_simplex&sphere_SVM} studies the simplex and sphere submanifolds separately for both $\vec u$ coordinates definitions. In general, results are qualitatively the same as in the previous task. The sphere part once again proves to be more useful than the simplex, where its highest attainable AUC is less than $2 \%$ lower than the optimal AUC obtained by the total phase space distance. The three particle orderings give roughly the same performance on the sphere for all $N$, with the $p_T$-ordering being slightly better for larger $N$. 

On the simplex, we again notice the phenomenon that the particle ordering that minimizes the simplex distance performs the worst and a larger performance gap ($\sim 20\%$) now exists for different orderings. For both $(p_0+p_3)$ and $(p_0-p_3)$ orderings, the performance has a noticeable drop when $N$ approaches 15, which is around the number of hard components in the new physics events (recall $N_{Z'} = 10$). On the other hand, the $p_T$-ordering does not have this effect and its performance grows steadily with increasing $N$.

\begin{figure}[t!]
	\centering
	\includegraphics[width=0.8\textwidth]{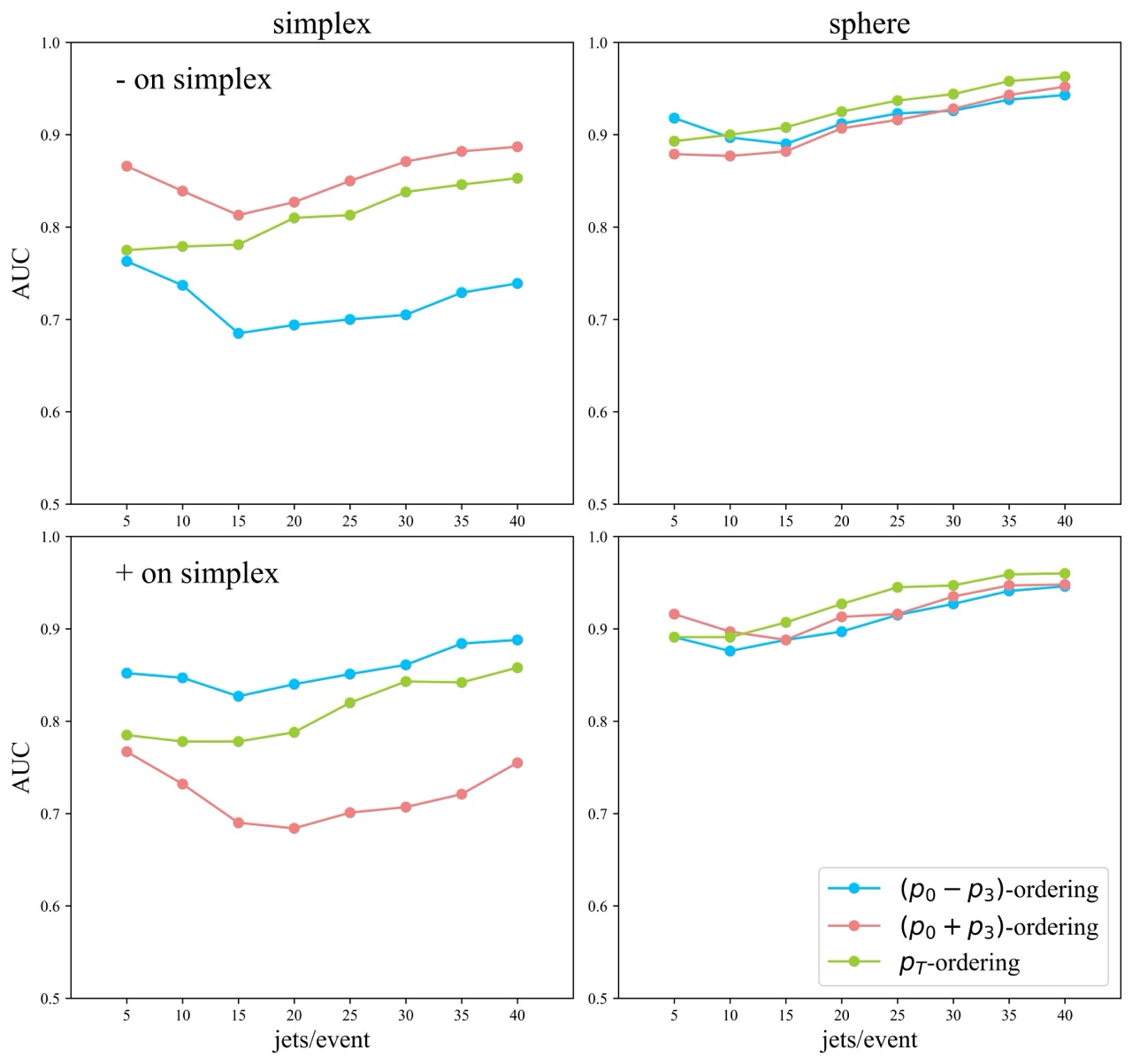}
    \caption{AUC values of QCD dijet events vs.~new physics events, where the input distance matrix is the simplex ({\it left}) or the sphere ({\it right}) part alone under both $\vec u$ coordinates definitions ({\it upper} for \texttt{- on simplex} and {\it lower} for \texttt{+ on simplex}).}
	\label{fig:QCDZvevents_Sphere11_l_simplex&sphere_SVM}
\end{figure}

\subsection{High Energy Collinear QCD vs. $W$ Boson Jets}
\label{subsec:WQCDJets}

We now focus on single jet tagging and apply the phase space formalism in the collinear limit, as described in Sec.~\ref{subsec:CollinearPS}. Proton-proton collision events are first simulated at $\sqrt{s} = 14$ TeV in \textsc{MadGraph}, with $W$ bosons being pair produced via $pp \to W^+ W^-$, gluons being generated via $q\overline{q}\to Z\to\nu\overline{\nu}g$, and quarks generated via $qg\to Z\to\nu\overline{\nu}q$. The particles are then hadronized and decayed in \textsc{Pythia} with default tuning and showering parameters. Afterwards, events are clustered into jets using \textsc{FastJet} 3.4.0 with the anti-$k_T$ inclusive algorithm and a jet radius of $R=1$. The jet transverse momentum is selected to be $p_T \in [500, 550]$ GeV and jet mass is set within the range of $m\in[60,100]$ GeV. At most two jets are kept with $|y| \leq 1.7$ and $|\phi| \leq \frac{\pi}{2}$.

To fix the number of constituents, we re-cluster each jet into $N$ subjets in \textsc{FastJet} using the exclusive $k_T$ algorithm. Only the hardest $N = 3, 4, 5, 10, 15, 20, 25, 30$ subjets are kept and are further manually set to be massless. Here the lower limit $N=3$ is chosen to reflect the two-pronged structure of a $W$ boson jet, whereas the upper limit is constrained by the inability of \textsc{FastJet} to find more than 30 subjets. We then rotate the 3-momentum of each jet so that its axis is aligned with the $+z$ direction. The jet constituents are represented in the $(E, p_x, p_y, p_z)$ coordinates, with their $E$ normalized by the total energy of the jet. 

Fig.~\ref{fig:10kWQCD_pT500to550} displays a random $W$ boson jet and a QCD jet with different $N$ subjets on the $y-\phi$ plane. Fig.~\ref{fig:mptratio_WQCD} shows the distributions of each subjet's $m/p_T$ on a log-log scale. Clearly with increasing $N$, more and more subjets have their mass becoming negligible comparing to $p_T$, again validating the masslessness assumption especially for large $N$.

\begin{figure}[t!]
    \centering
    \includegraphics[width=\textwidth]{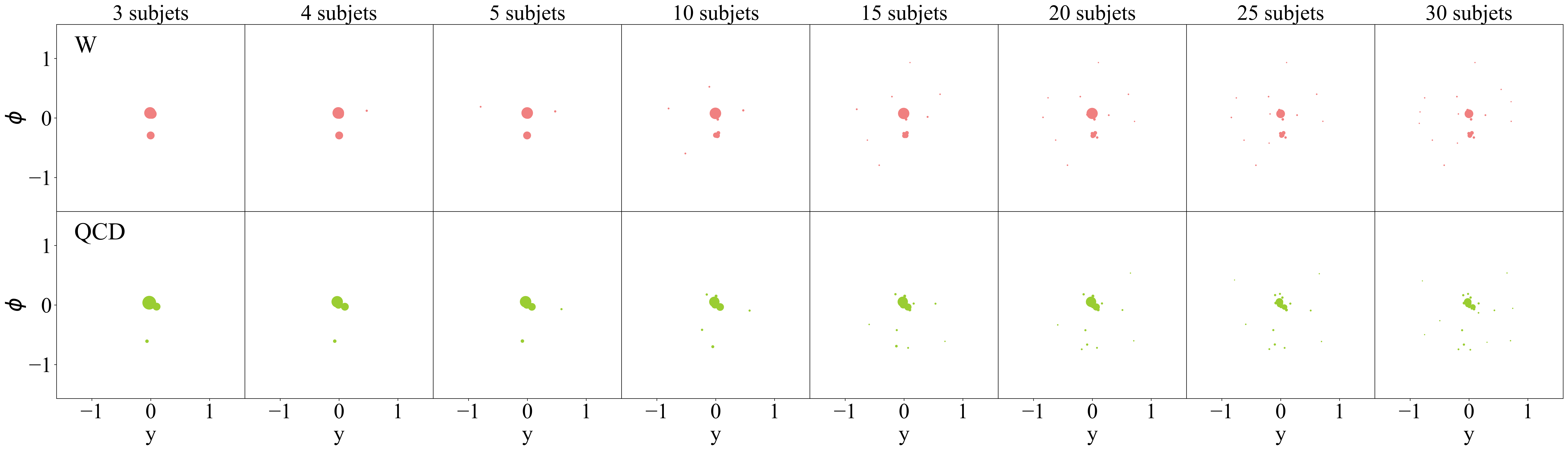}
    \caption{A $W$ (red; row 1) or QCD (green; row 2) jet as represented by $N = 3, 4, 5, 10, 15, 20, 25, 30$ exclusive subjets in the $y-\phi$ plane. The size of each dot is proportional to the $p_T$ of each subjet.}
    \label{fig:10kWQCD_pT500to550}
\end{figure}

\begin{figure}[t!]
    \centering
    \includegraphics[width=0.45\textwidth]{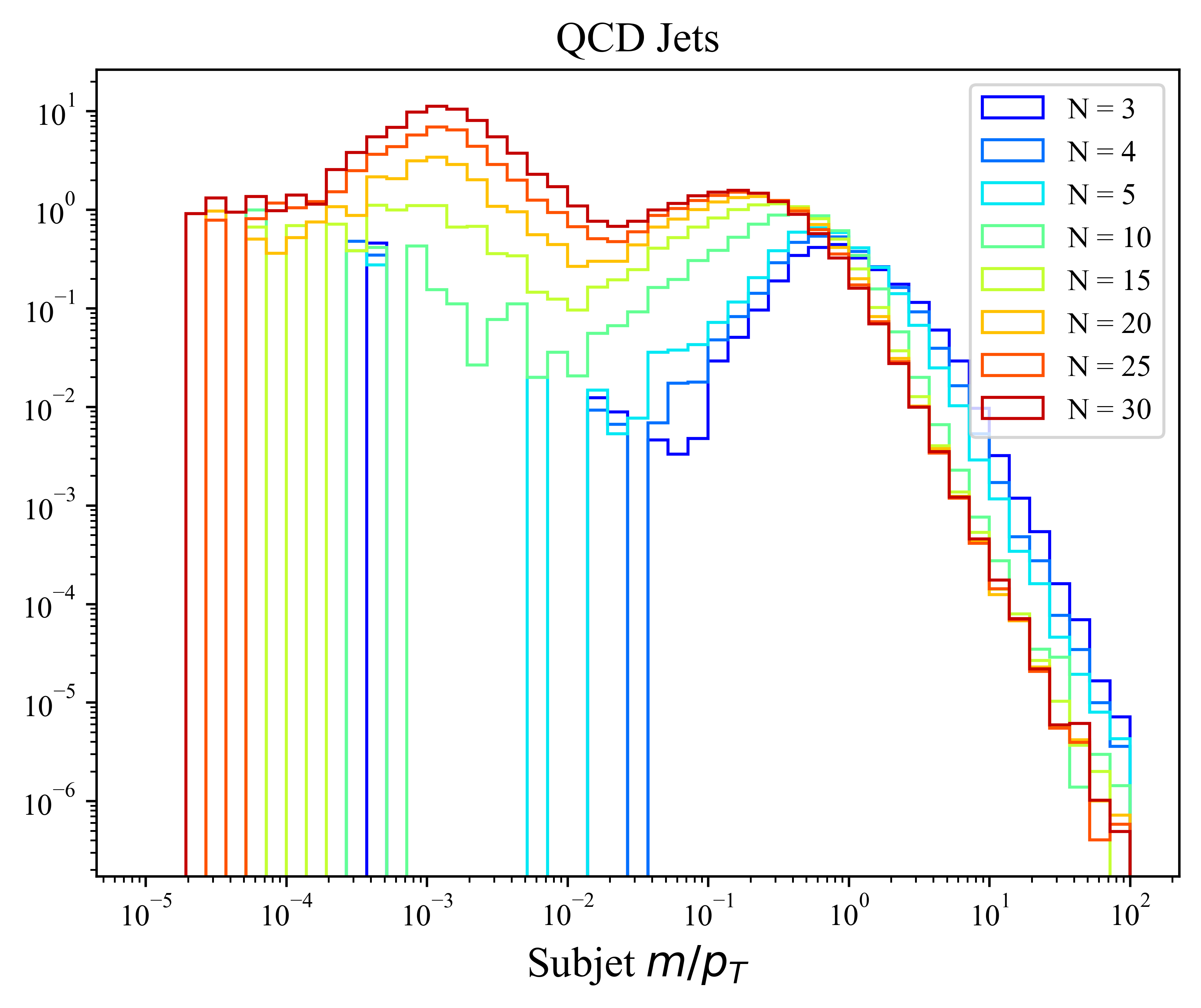}
    \includegraphics[width=0.45\textwidth]{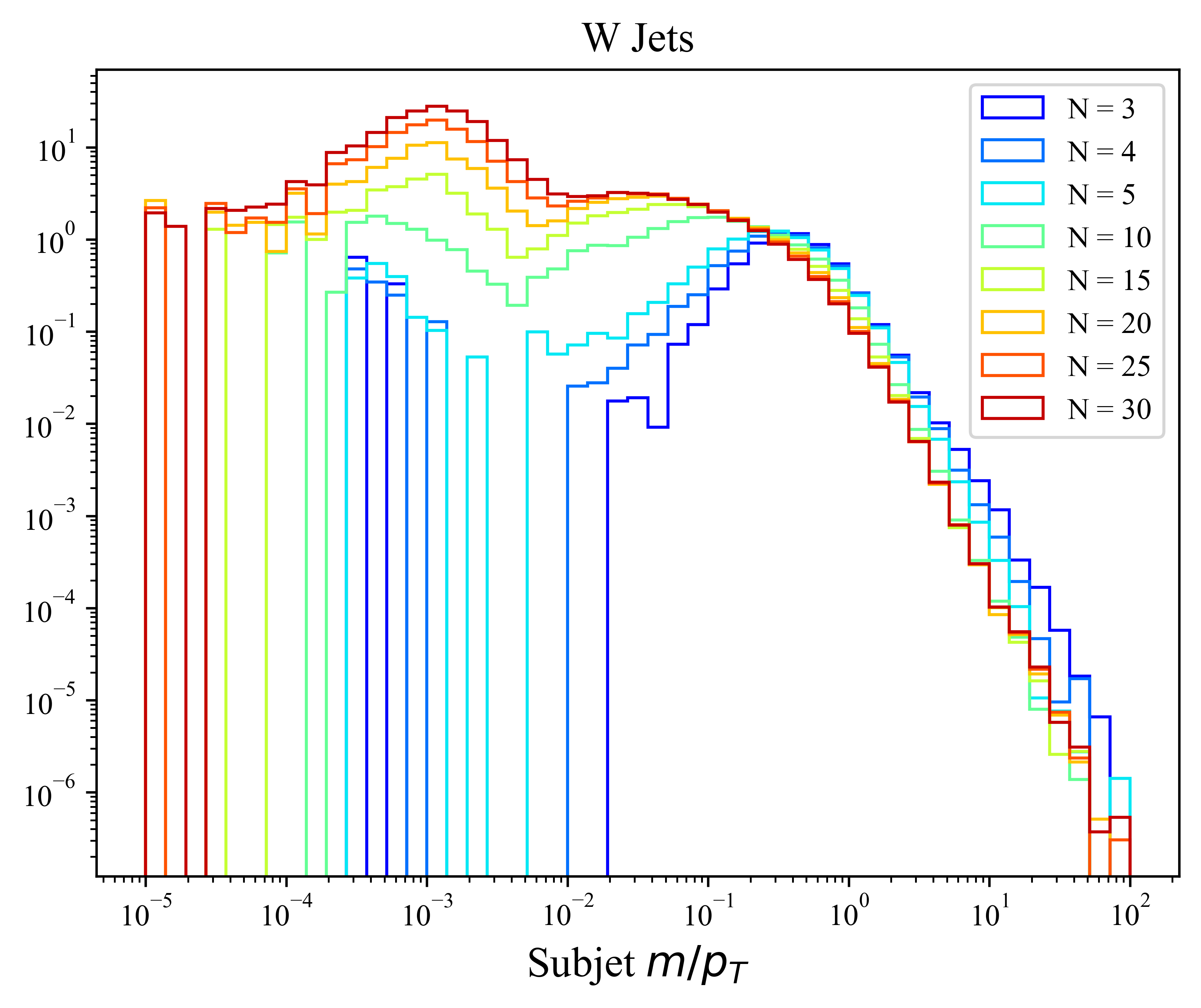}
    \caption{Histograms for each subjet's $m/p_T$ (before manually setting $m$ to zero) of the QCD ({\it left}) and $W$ ({\it right}) jets. Note that both axes are on a log basis, and the colors denote different numbers $N$ of subjets per jet.}
    \label{fig:mptratio_WQCD}
\end{figure}

\subsubsection{Results}

Fig.~\ref{fig:PStotal_WQCD} displays the histograms of the total phase space distances for the $W$ and QCD jets under the metric definitions (\texttt{- on simplex}, \texttt{sphere:simplex = 1:1}, $(p_0-p_3)$-ordering). The tagging results are displayed in Fig.~\ref{fig:QCDWjets_l_total_SVM_massless}. Overall, the performance is excellent, with the highest AUC score approaching 0.9. As before, the $\vec u$ coordinates definitions and relative simplex-to-sphere weighting have little impact on the classification, and all three particle orderings result in similar performance. The AUC score enjoys a minor increase at higher values of $N$, where the masslessness assumption holds better. In general, performance stabilizes at its peak after $N>10$.     

\begin{figure}[t!]
    \centering
    \includegraphics[width=\textwidth]{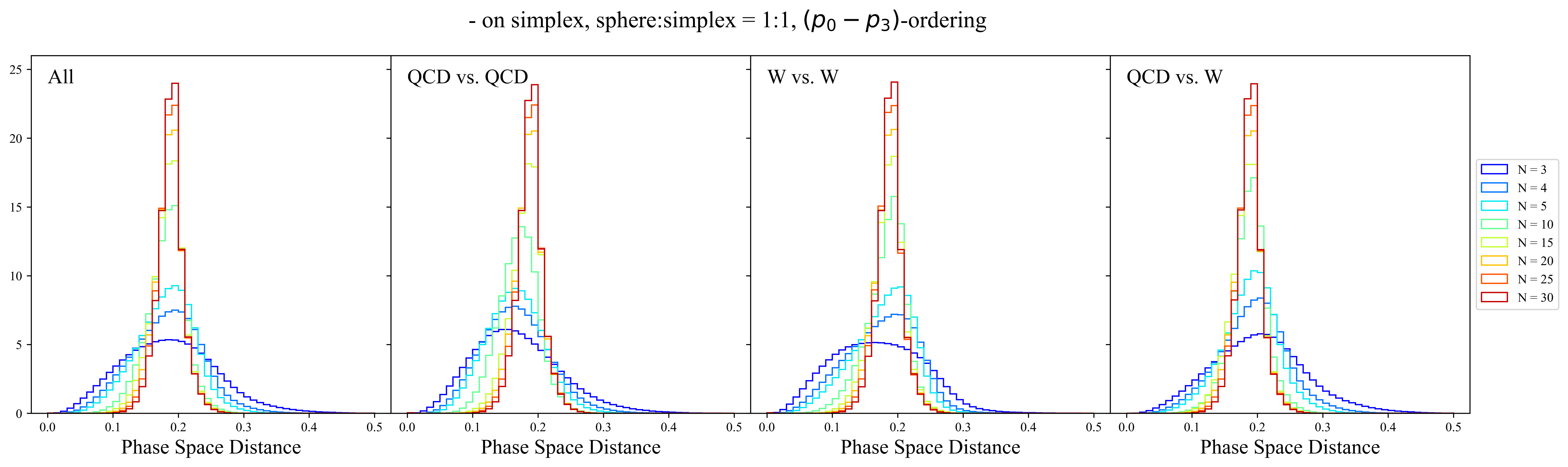}
    \caption{{\it From left to right:} Histograms of the total phase space distances between two arbitrary jets in the dataset, between two QCD jets, between two $W$ jets, and between a QCD jet and a $W$ jet. The particular set of metric definitions is chosen to be \texttt{- on simplex}, \texttt{sphere:simplex = 1:1}, and $(p_0-p_3)$-ordering. The colors denote different numbers $N$ of jets per event.}
    \label{fig:PStotal_WQCD}
\end{figure}

\begin{figure}[t!]
	\centering
	\includegraphics[width=\textwidth]{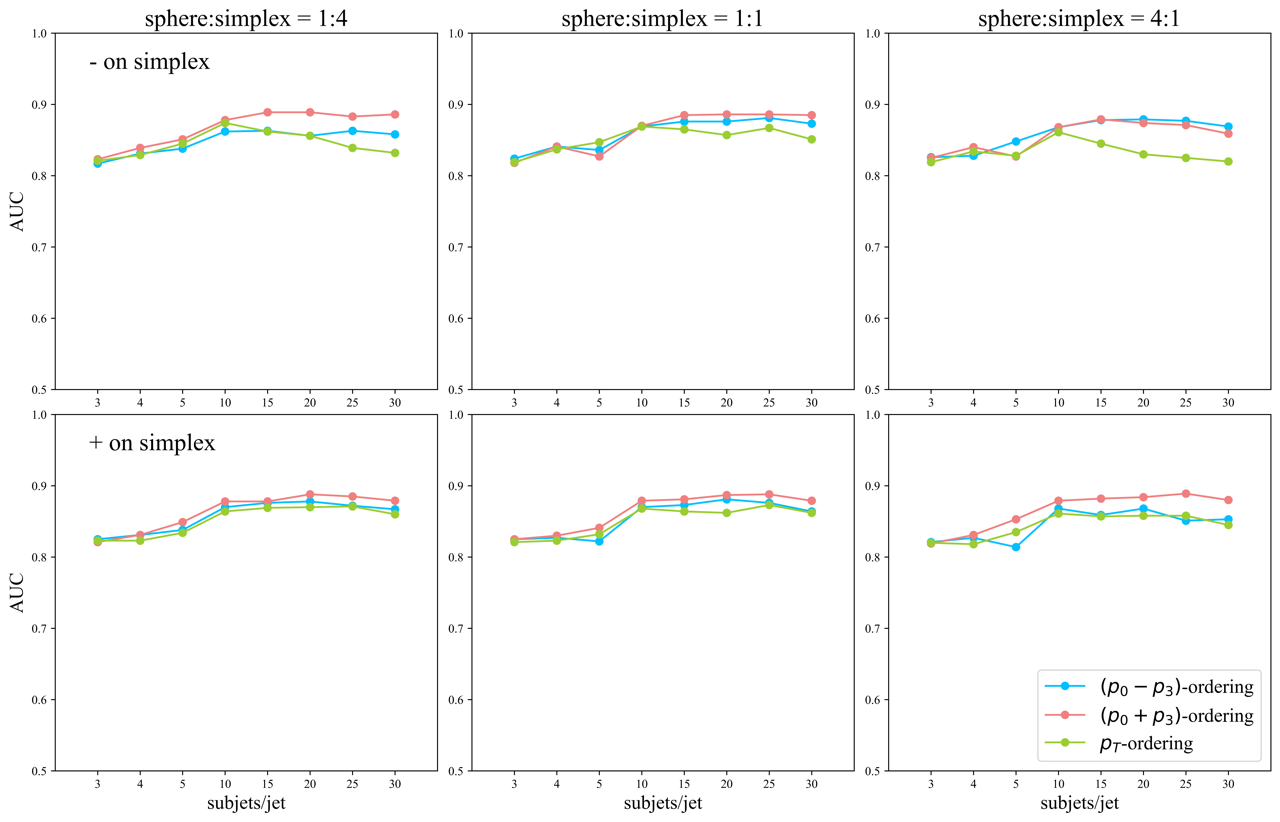}
    \caption{AUC values for $W$ vs.~QCD jets using SVM as the classifier. The number of subjets to represent each jet (i.e., the $x$-axis) is varied to be $N = 3, 4, 5, 10, 15, 20, 25, 30$. Each subplot corresponds to a distinct pair of $\vec u$ coordinates definition coupled with the sphere-to-simplex weighting ratio. The blue, red, and green curves represent results with constituents ordered by decreasing $(p_0-p_3)$, $(p_0+p_3)$, and $p_T$, respectively.}
	\label{fig:QCDWjets_l_total_SVM_massless}
\end{figure}

When zooming in to the performance of the simplex and the sphere submanifolds (see Fig.~\ref{fig:QCDWjets_Sphere11_l_simplex&sphere_SVM_massless}), we observe qualitatively similar phenomena as in the previous two event classification tasks, except that here the simplex part contains as much as, or even more, discrimination information than the sphere part, especially for $N>10$ with the best performing particle ordering. This difference highlights the importance of the simplex coordinates in the collinear limit, as now the $z$-axis is defined to be aligned with the jet axis with the value of $p_3$ carrying physical information about the jet structure. 

We also notice here that compared to the previous event classification tasks, different particle orderings now result in larger gaps in the AUC score of more than $20\%$. On the other hand, the ordering that exactly minimizes the simplex distance continues to give the worst performance for the simplex-only analysis, but is still the best performer for the sphere-only analysis. In this case, combining simplex and sphere into a total phase space distance has the distinctive advantage of stabilizing the performance for different particle orderings, as well as for a wide range of $N$ values with significant improvement at lower $N$.

\begin{figure}[t!]
	\centering
	\includegraphics[width=0.8\textwidth]{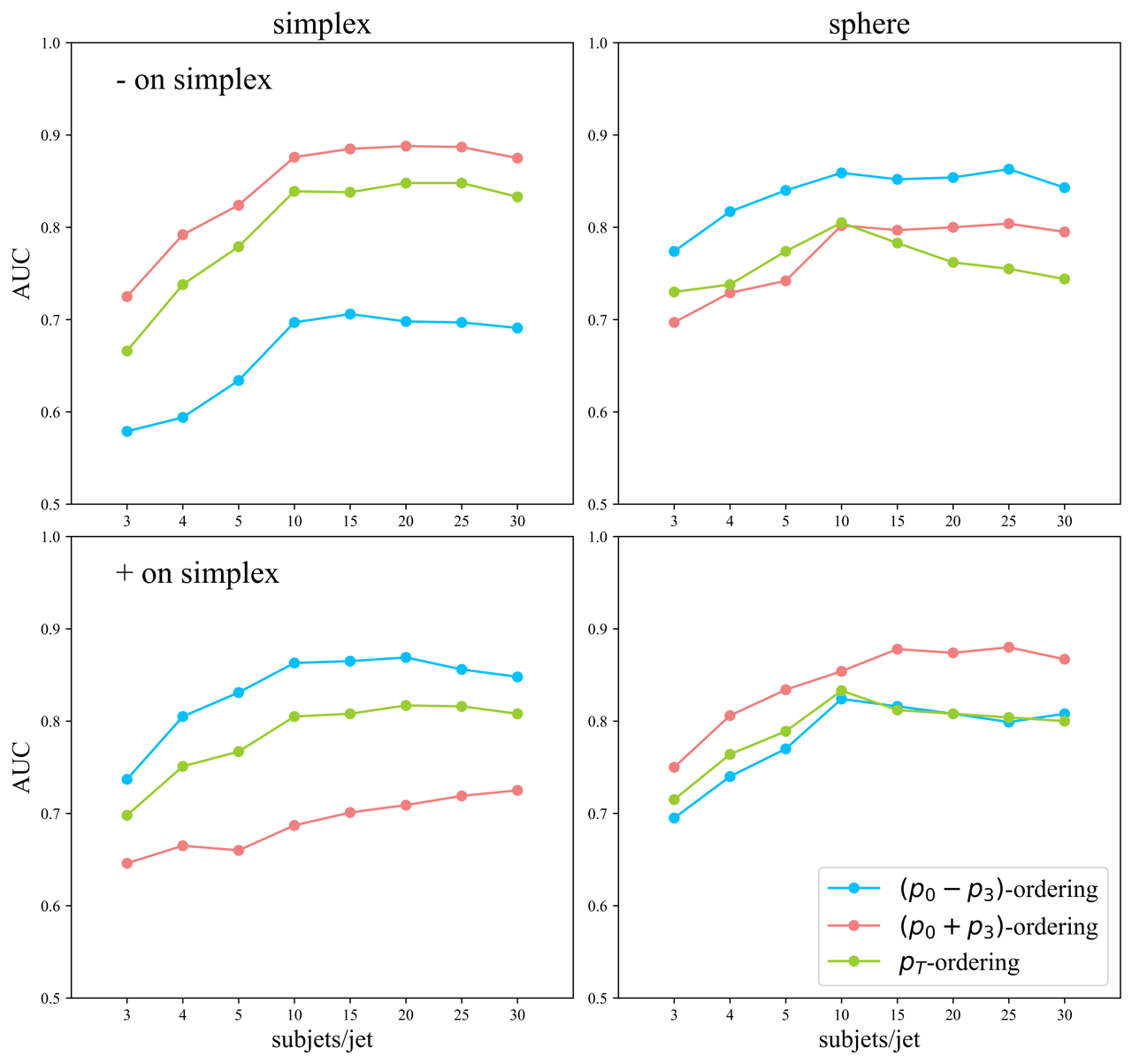}
    \caption{AUC values of $W$ vs.~QCD jets, where the input distance matrix is the simplex ({\it left}) or the sphere ({\it right}) part alone under both $\vec u$ coordinates definitions ({\it upper} for \texttt{- on simplex} and {\it lower} for \texttt{+ on simplex}).}
	\label{fig:QCDWjets_Sphere11_l_simplex&sphere_SVM_massless}
\end{figure}

\section{Conclusions and Future Directions}
\label{sec:conclusions}

The phase space manifold furnishes a natural and powerful notion of the distance between collider events. In this paper, we have developed a practical prescription for defining and computing phase space distances at hadron colliders, both for boosted objects and entire events. These distances illuminate the distinctive phase space structures of different scattering amplitudes and lead to competitive event classification when coupled to simple machine learning algorithms.

Specifically, we have found that any valid definition of the phase space distance gives rise to comparably good classification performance, reflecting the fact that the distance captures the same underlying physics. Furthermore, the particle ordering scheme to approximate the exact minimization required in the definition of the total phase space distance also bears relatively little impact on the downstream performance. This is an encouraging sign, suggesting that a future exact minimization would most likely give similar results with lower risk of qualitatively changing the conclusions drawn in the current study. 

Instead, the key factor that determines the classification performance is the number of objects used to represent an event. This is in part due to the validity of the masslessness assumption, as higher $N$ results in a lower mass of the individual clustered objects. More importantly, $N$ defines the dimensionality of the phase space manifold itself. Naturally, the performance of the distance varies with $N$ and is dependent on the specific tagging task under consideration.

There are a number of promising directions for future study. In converting explicit global coordinates on the phase space manifold to a practical metric, we have made certain choices that leave room for exploration. For example, we forego an exact (yet computationally prohibitive) minimization of the distance over particle permutations in favor of an approximate procedure that is only guaranteed to find local minima. Alternative approaches to the minimization problem (potentially using machine learning) may prove effective.

We have also fixed the dimensionality of the phase space manifold for a given ensemble of events by exclusively clustering into $N$ objects, mapping events onto an $N$-object representation with crucial influence over the effectiveness of ML-based classification. Alternative approaches to treating events of different intrinsic dimensionality may prove fruitful, and one may also want to explore the impact of different clustering algorithms (such as inclusive clustering) on the classification performance of the phase space distance.

In this work, we have only studied the simplest choices of the metrics on the simplex, the hypersphere, and their product phase space. In principle, the metrics can be generalized to be functions of the coordinates, as long as they induce the expected topology and the phase space volume element is preserved. Such generalization would grant extra freedom to adjust the distance between events according to their place in the phase space, which may lead to even better classification performance. Furthermore, the computation of the phase space distances currently makes the assumption of massless objects, which is not guaranteed to be a good approximation in all regimes. This could be circumvented by generalizing the phase space metric to massive particles, and would make the current framework more versatile.

On the practical side, while we have demonstrated the value of phase space distances in classification tasks, they are equally likely to be useful in other collider physics applications such as unsupervised anomaly detection and simulation-based inference. Further analytic and numerical study of the phase space structure of different scattering amplitudes is most likely to be productive.  

Long a backdrop to the calculation of scattering amplitudes, the geometry of phase space is beyond ready to step into the limelight.

\acknowledgments

We would like to thank Liam Brennan, Brian Henning, Joe Incandela, Tamas Vami, and Danyi Zhang for useful discussions. We extend our special thanks to the referee, who suggested several generalizations of the framework for future study. The work of TC, NC, and GK was supported in part by the U.S.\ Department of Energy under the grant DE-SC0011702 and performed in part at the Kavli Institute for Theoretical Physics, supported by the National Science Foundation under Grant No.~NSF PHY-1748958. TC is additionally supported by the U.S. Department of Energy Awards No.~DE-FOA-0002705, KA/OR55/22 (AIHEP) and No.~DE-AC02-76SF00515.

\appendix
\section{Coordinates on the Phase Space Hypersphere Submanifold}\label{app:spherematrix}

In this appendix, we derive the change of variables from $\Vec{v}$ to $\Vec{v}'$. For compactness, we will make use of index notation, where repeated indices are implicitly summed over $1,\ldots, N-1$. Consider the arguments of the delta functions in Eq.~\ref{eq:v_change_of_vars}. In order to have equality, we need
\begin{align}\label{eq:change_of_vars_condition}
    v^*_i v_i + |v_N|^2 = v'^*_k v'_k = \left( v_i^* A_{ki}^* \right) \left( A_{kj} v_j \right)
\end{align}
where the matrix $A$ represents the change of variables we wish to find. Recall that, at this point, all components of $\vec u$ are real and strictly positive, while $v_N$ is defined via
\begin{align}\label{eq:vN_implicit_def}
    v_N = -\frac{1}{u_N} \left( u_i v_i \right).
\end{align}
Plugging Eq.~\ref{eq:vN_implicit_def} into the left hand side of Eq.~\ref{eq:change_of_vars_condition}, we have
\begin{align}
    v^*_i v_i + |v_N|^2 &= v^*_i v_i + \frac{1}{u_N^2} \left( v_i^* u_i \right) \left( u_j v_j \right) \\
    &= v_i^* \delta_{ij} v_j + \frac{1}{u_N^2} \left( v_i^* u_i \right) \left( u_j v_j \right) \\
    &= v_i^* \left[ \frac{1}{u_N^2} \left(u_i u_j + u_N^2 \delta_{ij}\right) \right] v_j.
\end{align}
The quantity in brackets is essentially the square of the matrix we seek. With malice aforethought, we proceed to ``complete the square,'' making use of the $\vec u$ normalization condition, $u_i u_i = 1 - u_N^2$. We then have
\begin{align}
    &= v_i^* \left\{ \frac{1}{u_N^2} \left[u_i u_j \left( \frac{1-u_N^2}{(1+u_N)^2} + 2 \frac{u_N}{1+u_N}\right) + u_N^2 \delta_{ij}\right] \right\} v_j \\
    &= v_i^* \left[ \frac{1}{u_N^2} \left(u_i u_j \frac{u_k u_k}{(1+u_N)^2} + u_k u_j \delta_{ik} \frac{u_N}{1+u_N} + u_i u_k \delta_{kj} \frac{u_N}{1+u_N} + u_N^2 \delta_{ik}\delta_{kj}\right) \right] v_j \\
    &= v_i^* \left[ \frac{1}{u_N^2} \left( \frac{u_i u_k}{1+u_N} + u_N \delta_{ik} \right) \left( \frac{u_k u_j}{1+u_N} + u_N \delta_{kj} \right) \right] v_j,
\end{align}
from which we identify the real, symmetric transformation matrix
\begin{align}
    A_{ij} &= \frac{1}{u_N} \left( \frac{u_i u_j}{1+u_N} + u_N \delta_{ij} \right) \\
    &= \frac{1}{u_N(1-u_N^2)}\,\Hat{R}_{ij},
\end{align}
with $\Hat{R}$ given by Eq.~\ref{eq:spheremat}, as desired.

\section{Proof of Minimal Ordering for Euclidean Metric}\label{app:euclidminorder}

In this appendix, we present a simple proof regarding the permutation-invariant Euclidean metric.

\begin{lemma}
The permutation-invariant Euclidean metric distance between two sets of points $\{x_i\}_{i=1}^N,\{y_i\}_{i=1}^N$ can be expressed as
\begin{align}\label{eq:proofeucorder}
d^2\left(\{x_i\}_{i=1}^N,\{y_i\}_{i=1}^N\right) = \min_{\sigma\in S_N}\sum_{i=1}^N\left(
x_i-y_{\sigma(i)}
\right)^2 = \sum_{\substack{i=1\\x_i> x_{i+1}\\y_i> y_{i+1}}}^N(x_i-y_i)^2\,.
\end{align}
That is, one ordering of points that minimizes their distance is when they are both in descending order.
\end{lemma}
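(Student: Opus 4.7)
The plan is to reduce the squared-distance minimization to a maximization of an inner product, and then invoke (or directly prove) the rearrangement inequality via a pairwise swap argument.

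First I would expand the square:
\begin{align}
\sum_{i=1}^N (x_i - y_{\sigma(i)})^2 = \sum_{i=1}^N x_i^2 + \sum_{i=1}^N y_{\sigma(i)}^2 - 2\sum_{i=1}^N x_i\, y_{\sigma(i)}.
\end{align}
Since $\sigma \in S_N$ is a bijection on $\{1,\ldots,N\}$, the middle sum equals $\sum_i y_i^2$ regardless of $\sigma$. The first two terms are therefore permutation-invariant constants, and minimizing the left-hand side over $\sigma$ is equivalent to maximizing $\sum_i x_i\, y_{\sigma(i)}$.

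Next I would show that the maximum of $\sum_i x_i\, y_{\sigma(i)}$ is attained precisely when $\sigma$ pairs the entries in matched order (both descending). The argument is a standard ``exchange'' or bubble-sort step: suppose $\sigma$ is such that there exist indices $i<j$ with $x_i > x_j$ while $y_{\sigma(i)} < y_{\sigma(j)}$ (i.e., a local inversion relative to the descending-descending pairing). Form $\sigma'$ by swapping the values $\sigma(i)$ and $\sigma(j)$. A one-line computation gives
\begin{align}
\sum_{k=1}^N x_k\, y_{\sigma'(k)} - \sum_{k=1}^N x_k\, y_{\sigma(k)}
= (x_i - x_j)\bigl(y_{\sigma(j)} - y_{\sigma(i)}\bigr) > 0,
\end{align}
so any such inversion can be removed without decreasing the inner product. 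Iterating this exchange (finitely many times, since the number of inversions strictly decreases) terminates at a $\sigma^\star$ for which both $\{x_i\}$ and $\{y_{\sigma^\star(i)}\}$ are simultaneously in descending order. Reading this back through the expansion of the square yields the claimed right-hand side of \eqref{eq:proofeucorder}.

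The argument above is essentially the rearrangement inequality restricted to our setting, and I do not expect any serious obstacle. The only mild subtlety is the handling of ties: if some $x_i$ or $y_j$ values coincide, there are multiple minimizing orderings, but descending order is always among them, so the equality in \eqref{eq:proofeucorder} still holds. I would note this in passing but not belabor it, since equal values contribute identically regardless of how they are paired among themselves.
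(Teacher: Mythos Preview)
Your proposal is correct and follows essentially the same route as the paper: expand the square, reduce to maximizing $\sum_i x_i\,y_{\sigma(i)}$, and then eliminate inversions by pairwise swaps (the rearrangement inequality). The only cosmetic differences are that the paper assumes distinctness at the outset rather than treating ties at the end, and it spells out both cases $y_{\sigma(i)}\gtrless y_{\sigma(j)}$ separately where you combine them into a single inequality.
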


\begin{proof}
We will assume that all points are distinct, i.e., $x_i = x_j$ implies $i=j$.  With probability 1, all points are distinct in our application of this theorem to the study of phase space because the number of points is finite and drawn from a continuous probability distribution.  The case with degeneracies can be considered as well, but the main result does not change.

Without loss of generality, we can assume that the $x_i$ points are ordered, $x_1>x_2>\cdots>x_N$, while the $y_i$ points are in random order.  Expanding out the square, the metric can be expressed as
\begin{align}
 \min_{\sigma\in S_N}\sum_{i=1}^N\left(
x_i-y_{\sigma(i)}
\right)^2 = \sum_{i=1}^N \left(
x_i^2+y_i^2
\right)-2\max_{\sigma\in S_N}\sum_{i=1}^N
x_iy_{\sigma(i)}\,.
\end{align}
We only need to consider the final term on the right for the minimization of the metric. Let's consider two terms in that sum and determine if permuting them increases its value. That is, we will consider $x_i>x_j$ for $j>i$ and focus on the terms
\begin{align}\label{eq:summax}
\sum_{i=1}^N
x_iy_{\sigma(i)}\supset x_iy_{\sigma(i)}+x_jy_{\sigma(j)}\,.
\end{align}

Now, there are obviously two cases: either $y_{\sigma(i)}>y_{\sigma(j)}$ or $y_{\sigma(j)}>y_{\sigma(i)}$.  Further, we can assume, without loss of generality, that $\sigma(j) > \sigma(i)$.  First, if $y_{\sigma(i)}>y_{\sigma(j)}$, then note that
\begin{align}
(x_i-x_j)(y_{\sigma(i)}-y_{\sigma(j)})>0\,,
\end{align}
or that
\begin{align}
x_iy_{\sigma(i)} + x_jy_{\sigma(j)}>x_iy_{\sigma(j)} + x_jy_{\sigma(i)}\,.
\end{align}
That is, the points are already ordered to minimize the metric, with $i<j$ and $\sigma(i)<\sigma(j)$.  In the other case, where $y_{\sigma(j)}>y_{\sigma(i)}$, we have
\begin{align}
(x_i-x_j)(y_{\sigma(j)}-y_{\sigma(i)})>0\,,
\end{align}
or that
\begin{align}
x_iy_{\sigma(j)} + x_jy_{\sigma(i)}>x_iy_{\sigma(i)} + x_jy_{\sigma(j)}\,.
\end{align}
In this case, we can swap the indices on the $y$ points, $\sigma(i)\leftrightarrow \sigma(j)$, and after this swap, the metric is minimized with the index ordering $i< j$ and $\sigma(i)<\sigma(j)$.

We can perform this pairwise comparison of points and their contribution to the metric, swapping indices on the $y$ points as necessary, until no swap increases the value of the sum in Eq.~\ref{eq:summax}.  After this procedure, we have therefore reordered all $y$ points such that $y_i> y_j$ with $i<j$, for all $1\leq i<j\leq N$.  That is, all $y$ points have been sorted in descending order, just like the $x$ points.  Further, note that this sorting is the global minimum because any other ordering necessarily increases the metric distance.  This then proves the equality of Eq.~\ref{eq:proofeucorder}.
\end{proof}

\section{AUC Score Tables for the Classification Tasks}\label{app:tables}

In this appendix, we provide all the AUC scores for the three classification tasks discussed in the main text from Fig.~\ref{fig:QCDtevents_l_total_SVM} to Fig.~\ref{fig:QCDWjets_Sphere11_l_simplex&sphere_SVM_massless}.

\begin{landscape}
\begin{table}[ht]
    \centering
    \begin{tabular}{|c|c|c|c|c|c|c|c|c|c|c|c|}
        \hline
        Distance & $\Vec{u}$ Definition & Sphere : Simplex & Particle Ordering & $N=3$ & $N=4$ & $N=5$ & $N=10$ & $N=15$ & $N=20$ & $N=25$ & $N=30$ \\
        \hline
        \multirow{18}{*}{Total} & \multirow{9}{*}{+ on simplex} & \multirow{3}{*}{1:4} & $p_0+p_3$ & 0.828 & 0.857 & 0.858 & 0.808 & 0.785 & 0.767 & 0.697 & 0.701 \\
        \cline{4-12}
        & & & $p_T$ & 0.825 & 0.857 & 0.849 & 0.813 & 0.805 & 0.778 & 0.795 & 0.790 \\
        \cline{4-12}
        & & & $p_0-p_3$ & 0.824 & 0.852 & 0.845 & 0.808 & 0.785 & 0.779 & 0.770 & 0.755 \\
        \cline{3-12}
        & & \multirow{3}{*}{1:1} & $p_0+p_3$ & 0.809 & 0.851 & 0.855 & 0.823 & 0.814 & 0.801 & 0.785 & 0.781 \\
        \cline{4-12}
        & & & $p_T$ & 0.828 & 0.849 & 0.843 & 0.818 & 0.814 & 0.804 & 0.801 & 0.809 \\
        \cline{4-12}
        & & & $p_0-p_3$ & 0.823 & 0.848 & 0.844 & 0.809 & 0.767 & 0.761 & 0.764 & 0.759 \\
        \cline{3-12}
        & & \multirow{3}{*}{4:1} & $p_0+p_3$ & 0.813 & 0.846 & 0.845 & 0.825 & 0.824 & 0.812 & 0.800 & 0.797 \\
        \cline{4-12}
        & & & $p_T$ & 0.823 & 0.844 & 0.832 & 0.809 & 0.813 & 0.807 & 0.811 & 0.814 \\
        \cline{4-12}
        & & & $p_0-p_3$ & 0.810 & 0.838 & 0.829 & 0.786 & 0.775 & 0.756 & 0.756 & 0.744 \\
        \cline{2-12}
        & \multirow{9}{*}{- on simplex} & \multirow{3}{*}{1:4} & $p_0+p_3$ & 0.827 & 0.850 & 0.851 & 0.797 & 0.793 & 0.779 & 0.770 & 0.759 \\
        \cline{4-12}
        & & & $p_T$ & 0.821 & 0.852 & 0.853 & 0.807 & 0.798 & 0.794 & 0.782 & 0.781 \\
        \cline{4-12}
        & & & $p_0-p_3$ & 0.831 & 0.850 & 0.850 & 0.792 & 0.772 & 0.757 & 0.730 & 0.710 \\
        \cline{3-12}
        & & \multirow{3}{*}{1:1} & $p_0+p_3$ & 0.820 & 0.860 & 0.845 & 0.803 & 0.772 & 0.767 & 0.770 & 0.763 \\
        \cline{4-12}
        & & & $p_T$ & 0.821 & 0.853 & 0.848 & 0.816 & 0.809 & 0.807 & 0.805 & 0.798 \\
        \cline{4-12}
        & & & $p_0-p_3$ & 0.826 & 0.838 & 0.849 & 0.811 & 0.808 & 0.797 & 0.799 & 0.783 \\
        \cline{3-12}
        & & \multirow{3}{*}{4:1} & $p_0+p_3$ & 0.818 & 0.847 & 0.834 & 0.769 & 0.781 & 0.761 & 0.753 & 0.755 \\
        \cline{4-12}
        & & & $p_T$ & 0.832 & 0.848 & 0.834 & 0.797 & 0.805 & 0.812 & 0.808 & 0.801 \\
        \cline{4-12}
        & & & $p_0-p_3$ & 0.824 & 0.846 & 0.835 & 0.825 & 0.814 & 0.810 & 0.808 & 0.801 \\
        \hline
        \multirow{6}{*}{Simplex} & \multirow{3}{*}{+ on simplex} & \multirow{3}{*}{N/A} & $p_0+p_3$ & 0.600 & 0.652 & 0.641 & 0.620 & 0.603 & 0.606 & 0.602 & 0.595 \\
        \cline{4-12}
        & & & $p_T$ & 0.618 & 0.663 & 0.666 & 0.655 & 0.642 & 0.642 & 0.645 & 0.637 \\
        \cline{4-12}
        & & & $p_0-p_3$ & 0.691 & 0.758 & 0.763 & 0.727 & 0.726 & 0.715 & 0.687 & 0.692 \\
        \cline{2-12}
        & \multirow{3}{*}{- on simplex} & \multirow{3}{*}{N/A} & $p_0+p_3$ & 0.689 & 0.728 & 0.737 & 0.719 & 0.702 & 0.705 & 0.700 & 0.682 \\
        \cline{4-12}
        & & & $p_T$ & 0.614 & 0.649 & 0.659 & 0.647 & 0.634 & 0.624 & 0.636 & 0.639 \\
        \cline{4-12}
        & & & $p_0-p_3$ & 0.612 & 0.636 & 0.645 & 0.609 & 0.598 & 0.586 & 0.578 & 0.583 \\
        \hline
        \multirow{6}{*}{Sphere} & \multirow{3}{*}{+ on simplex} & \multirow{3}{*}{N/A} & $p_0+p_3$ & 0.774 & 0.822 & 0.829 & 0.800 & 0.757 & 0.751 & 0.726 & 0.717 \\
        \cline{4-12}
        & & & $p_T$ & 0.729 & 0.759 & 0.789 & 0.787 & 0.786 & 0.777 & 0.778 & 0.783 \\
        \cline{4-12}
        & & & $p_0-p_3$ & 0.725 & 0.758 & 0.768 & 0.756 & 0.732 & 0.726 & 0.730 & 0.715 \\
        \cline{2-12}
        & \multirow{3}{*}{- on simplex} & \multirow{3}{*}{N/A} & $p_0+p_3$ & 0.729 & 0.767 & 0.778 & 0.738 & 0.731 & 0.726 & 0.713 & 0.709 \\
        \cline{4-12}
        & & & $p_T$ & 0.731 & 0.774 & 0.805 & 0.781 & 0.772 & 0.775 & 0.782 & 0.777 \\
        \cline{4-12}
        & & & $p_0-p_3$ & 0.759 & 0.832 & 0.829 & 0.803 & 0.758 & 0.747 & 0.738 & 0.718 \\
        \hline
    \end{tabular}
    \caption{AUC scores for top vs.~QCD events}
    \label{tab:tQCD}
\end{table}
\end{landscape}

\begin{landscape}
\begin{table}[ht]
    \centering
    \begin{tabular}{|c|c|c|c|c|c|c|c|c|c|c|c|}
        \hline
        Distance & $\Vec{u}$ Definition & Sphere : Simplex & Particle Ordering & $N=5$ & $N=10$ & $N=15$ & $N=20$ & $N=25$ & $N=30$ & $N=35$ & $N=40$ \\
        \hline
        \multirow{18}{*}{Total} & \multirow{9}{*}{+ on simplex} & \multirow{3}{*}{1:4} & $p_0+p_3$ & 0.945 & 0.915 & 0.899 & 0.924 & 0.933 & 0.951 & 0.961 & 0.963 \\
        \cline{4-12}
        & & & $p_T$ & 0.941 & 0.923 & 0.917 & 0.947 & 0.963 & 0.963 & 0.978 & 0.982 \\
        \cline{4-12}
        & & & $p_0-p_3$ & 0.938 & 0.919 & 0.907 & 0.914 & 0.933 & 0.946 & 0.947 & 0.958 \\
        \cline{3-12}
        & & \multirow{3}{*}{1:1} & $p_0+p_3$ & 0.944 & 0.931 & 0.937 & 0.952 & 0.956 & 0.970 & 0.972 & 0.979 \\
        \cline{4-12}
        & & & $p_T$ & 0.935 & 0.925 & 0.922 & 0.942 & 0.955 & 0.962 & 0.977 & 0.981 \\
        \cline{4-12}
        & & & $p_0-p_3$ & 0.941 & 0.920 & 0.919 & 0.916 & 0.931 & 0.946 & 0.955 & 0.962 \\
        \cline{3-12}
        & & \multirow{3}{*}{4:1} & $p_0+p_3$ & 0.928 & 0.938 & 0.941 & 0.949 & 0.960 & 0.966 & 0.972 & 0.975 \\
        \cline{4-12}
        & & & $p_T$ & 0.921 & 0.919 & 0.923 & 0.939 & 0.948 & 0.959 & 0.968 & 0.972 \\
        \cline{4-12}
        & & & $p_0-p_3$ & 0.923 & 0.914 & 0.920 & 0.916 & 0.923 & 0.940 & 0.952 & 0.953 \\
        \cline{2-12}
        & \multirow{9}{*}{- on simplex} & \multirow{3}{*}{1:4} & $p_0+p_3$ & 0.944 & 0.914 & 0.906 & 0.923 & 0.931 & 0.936 & 0.954 & 0.970 \\
        \cline{4-12}
        & & & $p_T$ & 0.937 & 0.910 & 0.927 & 0.949 & 0.966 & 0.972 & 0.974 & 0.979 \\
        \cline{4-12}
        & & & $p_0-p_3$ & 0.941 & 0.911 & 0.902 & 0.925 & 0.929 & 0.942 & 0.959 & 0.965 \\
        \cline{3-12}
        & & \multirow{3}{*}{1:1} & $p_0+p_3$ & 0.943 & 0.924 & 0.917 & 0.932 & 0.933 & 0.944 & 0.958 & 0.968 \\
        \cline{4-12}
        & & & $p_T$ & 0.929 & 0.930 & 0.923 & 0.945 & 0.955 & 0.966 & 0.974 & 0.981 \\
        \cline{4-12}
        & & & $p_0-p_3$ & 0.937 & 0.936 & 0.935 & 0.949 & 0.956 & 0.959 & 0.964 & 0.971 \\
        \cline{3-12}
        & & \multirow{3}{*}{4:1} & $p_0+p_3$ & 0.935 & 0.918 & 0.913 & 0.927 & 0.932 & 0.940 & 0.948 & 0.957 \\
        \cline{4-12}
        & & & $p_T$ & 0.918 & 0.919 & 0.924 & 0.941 & 0.950 & 0.963 & 0.970 & 0.970 \\
        \cline{4-12}
        & & & $p_0-p_3$ & 0.928 & 0.939 & 0.939 & 0.953 & 0.963 & 0.965 & 0.971 & 0.976 \\
        \hline
        \multirow{6}{*}{Simplex} & \multirow{3}{*}{+ on simplex} & \multirow{3}{*}{N/A} & $p_0+p_3$ & 0.767 & 0.732 & 0.690 & 0.684 & 0.701 & 0.707 & 0.721 & 0.755 \\
        \cline{4-12}
        & & & $p_T$ & 0.785 & 0.778 & 0.778 & 0.788 & 0.820 & 0.843 & 0.842 & 0.858 \\
        \cline{4-12}
        & & & $p_0-p_3$ & 0.852 & 0.847 & 0.827 & 0.840 & 0.851 & 0.861 & 0.884 & 0.888 \\
        \cline{2-12}
        & \multirow{3}{*}{- on simplex} & \multirow{3}{*}{N/A} & $p_0+p_3$ & 0.866 & 0.839 & 0.813 & 0.827 & 0.850 & 0.871 & 0.882 & 0.887 \\
        \cline{4-12}
        & & & $p_T$ & 0.775 & 0.779 & 0.781 & 0.810 & 0.813 & 0.838 & 0.846 & 0.853\\
        \cline{4-12}
        & & & $p_0-p_3$ & 0.763 & 0.737 & 0.685 & 0.694 & 0.700 & 0.705 & 0.729 & 0.739 \\
        \hline
        \multirow{6}{*}{Sphere} & \multirow{3}{*}{+ on simplex} & \multirow{3}{*}{N/A} & $p_0+p_3$ & 0.916 & 0.897 & 0.888 & 0.913 & 0.916 & 0.935 & 0.947 & 0.948 \\
        \cline{4-12}
        & & & $p_T$ & 0.891 & 0.891 & 0.907 & 0.927 & 0.945 & 0.947 & 0.959 & 0.960 \\
        \cline{4-12}
        & & & $p_0-p_3$ & 0.891 & 0.876 & 0.888 & 0.897 & 0.915 & 0.927 & 0.941 & 0.946 \\
        \cline{2-12}
        & \multirow{3}{*}{- on simplex} & \multirow{3}{*}{N/A} & $p_0+p_3$ & 0.879 & 0.877 & 0.882 & 0.907 & 0.916 & 0.928 & 0.943 & 0.952 \\
        \cline{4-12}
        & & & $p_T$ & 0.893 & 0.900 & 0.908 & 0.925 & 0.937 & 0.944 & 0.958 & 0.963 \\
        \cline{4-12}
        & & & $p_0-p_3$ & 0.918 & 0.897 & 0.890 & 0.912 & 0.923 & 0.926 & 0.938 & 0.943 \\
        \hline
    \end{tabular}
    \caption{AUC scores for QCD vs.~New Physics events}
    \label{tab:tQCD}
\end{table}
\end{landscape}

\begin{landscape}
\begin{table}[ht]
    \centering
    \begin{tabular}{|c|c|c|c|c|c|c|c|c|c|c|c|}
        \hline
        Distance & $\Vec{u}$ Definition & Sphere : Simplex & Particle Ordering & $N=3$ & $N=4$ & $N=5$ & $N=10$ & $N=15$ & $N=20$ & $N=25$ & $N=30$ \\
        \hline
        \multirow{18}{*}{Total} & \multirow{9}{*}{+ on simplex} & \multirow{3}{*}{1:4} & $p_0+p_3$ & 0.821 & 0.831 & 0.849 & 0.878 & 0.878 & 0.888 & 0.885 & 0.879 \\
        \cline{4-12}
        & & & $p_T$ & 0.823 & 0.823 & 0.834 & 0.864 & 0.869 & 0.870 & 0.871 & 0.860 \\
        \cline{4-12}
        & & & $p_0-p_3$ & 0.825 & 0.831 & 0.838 & 0.870 & 0.876 & 0.878 & 0.872 & 0.867 \\
        \cline{3-12}
        & & \multirow{3}{*}{1:1} & $p_0+p_3$ & 0.825 & 0.830 & 0.841 & 0.879 & 0.881 & 0.887 & 0.888 & 0.879 \\
        \cline{4-12}
        & & & $p_T$ & 0.821 & 0.823 & 0.832 & 0.868 & 0.864 & 0.862 & 0.873 & 0.862 \\
        \cline{4-12}
        & & & $p_0-p_3$ & 0.825 & 0.827 & 0.822 & 0.870 & 0.873 & 0.881 & 0.876 & 0.864 \\
        \cline{3-12}
        & & \multirow{3}{*}{4:1} & $p_0+p_3$ & 0.819 & 0.831 & 0.853 & 0.879 & 0.882 & 0.884 & 0.889 & 0.880 \\
        \cline{4-12}
        & & & $p_T$ & 0.820 & 0.818 & 0.835 & 0.861 & 0.857 & 0.858 & 0.858 & 0.845 \\
        \cline{4-12}
        & & & $p_0-p_3$ & 0.821 & 0.827 & 0.814 & 0.868 & 0.859 & 0.868 & 0.851 & 0.853 \\
        \cline{2-12}
        & \multirow{9}{*}{- on simplex} & \multirow{3}{*}{1:4} & $p_0+p_3$ & 0.823 & 0.839 & 0.851 & 0.878 & 0.889 & 0.889 & 0.883 & 0.886 \\
        \cline{4-12}
        & & & $p_T$ & 0.821 & 0.829 & 0.845 & 0.874 & 0.862 & 0.856 & 0.839 & 0.832 \\
        \cline{4-12}
        & & & $p_0-p_3$ & 0.817 & 0.831 & 0.838 & 0.862 & 0.863 & 0.856 & 0.863 & 0.858 \\
        \cline{3-12}
        & & \multirow{3}{*}{1:1} & $p_0+p_3$ & 0.818 & 0.841 & 0.827 & 0.870 & 0.885 & 0.886 & 0.886 & 0.885 \\
        \cline{4-12}
        & & & $p_T$ & 0.819 & 0.837 & 0.847 & 0.869 & 0.865 & 0.857 & 0.867 & 0.851 \\
        \cline{4-12}
        & & & $p_0-p_3$ & 0.824 & 0.841 & 0.836 & 0.869 & 0.876 & 0.876 & 0.881 & 0.873 \\
        \cline{3-12}
        & & \multirow{3}{*}{4:1} & $p_0+p_3$ & 0.825 & 0.840 & 0.827 & 0.868 & 0.879 & 0.874 & 0.871 & 0.859 \\
        \cline{4-12}
        & & & $p_T$ & 0.819 & 0.834 & 0.828 & 0.861 & 0.845 & 0.830 & 0.825 & 0.820 \\
        \cline{4-12}
        & & & $p_0-p_3$ & 0.826 & 0.828 & 0.848 & 0.868 & 0.878 & 0.879 & 0.877 & 0.869 \\
        \hline
        \multirow{6}{*}{Simplex} & \multirow{3}{*}{+ on simplex} & \multirow{3}{*}{N/A} & $p_0+p_3$ & 0.646 & 0.665 & 0.660 & 0.687 & 0.701 & 0.709 & 0.719 & 0.725 \\
        \cline{4-12}
        & & & $p_T$ & 0.698 & 0.751 & 0.767 & 0.805 & 0.808 & 0.817 & 0.816 & 0.808 \\
        \cline{4-12}
        & & & $p_0-p_3$ & 0.737 & 0.805 & 0.831 & 0.863 & 0.865 & 0.869 & 0.856 & 0.848 \\
        \cline{2-12}
        & \multirow{3}{*}{- on simplex} & \multirow{3}{*}{N/A} & $p_0+p_3$ & 0.725 & 0.792 & 0.824 & 0.876 & 0.885 & 0.888 & 0.887 & 0.875 \\
        \cline{4-12}
        & & & $p_T$ & 0.666 & 0.738 & 0.779 & 0.839 & 0.838 & 0.848 & 0.848 & 0.833 \\
        \cline{4-12}
        & & & $p_0-p_3$ & 0.579 & 0.594 & 0.634 & 0.697 & 0.706 & 0.698 & 0.697 & 0.691 \\
        \hline
        \multirow{6}{*}{Sphere} & \multirow{3}{*}{+ on simplex} & \multirow{3}{*}{N/A} & $p_0+p_3$ & 0.750 & 0.806 & 0.834 & 0.854 & 0.878 & 0.874 & 0.880 & 0.867 \\
        \cline{4-12}
        & & & $p_T$ & 0.715 & 0.764 & 0.789 & 0.833 & 0.812 & 0.808 & 0.804 & 0.800 \\
        \cline{4-12}
        & & & $p_0-p_3$ & 0.695 & 0.740 & 0.770 & 0.824 & 0.816 & 0.808 & 0.799 & 0.808 \\
        \cline{2-12}
        & \multirow{3}{*}{- on simplex} & \multirow{3}{*}{N/A} & $p_0+p_3$ & 0.697 & 0.729 & 0.742 & 0.802 & 0.797 & 0.800 & 0.804 & 0.795 \\
        \cline{4-12}
        & & & $p_T$ & 0.730 & 0.738 & 0.774 & 0.805 & 0.783 & 0.762 & 0.755 & 0.744 \\
        \cline{4-12}
        & & & $p_0-p_3$ & 0.774 & 0.817 & 0.840 & 0.859 & 0.852 & 0.854 & 0.863 & 0.843 \\
        \hline
    \end{tabular}
    \caption{AUC scores for QCD vs.~W jets}
    \label{tab:tQCD}
\end{table}
\end{landscape}


\bibliographystyle{JHEP}
\bibliography{Draft}

\end{document}